\newif\ifipco
\newif\ifshort
\newcommand\NN{{\mathbb N}}
\newcommand\RR{{\mathbb R}}
\newcommand\ZZ{{\mathbb Z}}
\newcommand\TT{{\mathbb T}}
\newcommand\cI{{\mathcal I}}
\newcommand\cN{{\mathcal N}}
\newcommand\cC{{\mathcal C}}
\newcommand\cS{{\mathcal S}}
\newcommand\cO{{\mathcal O}}
\newcommand\Cd{{d_c}}
\newcommand\Hd{{d_H}}
\newcommand\SetOf[2]{\left\{#1 \mid #2\right\}}
\newcommand\smallSetOf[2]{\{#1 \mid #2\}}
\newcommand\bigSetOf[2]{\bigl\{#1 \bigm| #2\bigr\}}
\newcommand\biggSetOf[2]{\biggl\{#1 \biggm| #2\biggr\}}
\newcommand\BiggSetOf[2]{\Biggl\{#1 \Biggm| #2\Biggr\}}
\DeclareMathOperator\cl{cl}
\DeclareMathOperator\conv{conv}
\DeclareMathOperator\Sym{Sym}
\DeclareMathOperator\argmax{arg\,max}
\DeclareMathOperator\supp{supp}
\newcommand\doi[1]{\href{http://dx.doi.org/#1}{\texttt{doi:#1}}}
\newtheorem{theorem}{Theorem}
\newtheorem{proposition}[theorem]{Proposition}
\newtheorem{corollary}[theorem]{Corollary}
\theoremstyle{remark}
\newtheorem{remark}[theorem]{Remark}
\newtheorem{example}[theorem]{Example}
\newtheorem{definition}[theorem]{Definition}
\newtheorem{question}[theorem]{Question}
\newcommand\polymake{\texttt{polymake}\xspace}
\newcommand\mptopcom{\texttt{mptopcom}\xspace}
\title{The Polyhedral Geometry of Truthful Auctions}
\author{Michael Joswig\inst{1,2} \and Max Klimm\inst{1} \and Sylvain Spitz\inst{1}} 
\institute{Technische Universität Berlin, 10623 Berlin, Germany \and
Max-Planck Institute for Mathematics in the Sciences, 04103 Leipzig, Germany}
\author{Michael Joswig$^1$} 
\address{
  $^1$Technische Universit\"at Berlin,
  Discrete Mathematics/Geometry;
  Max-Planck Institute for Mathematics in the Sciences, Leipzig
}
\author{Max Klimm$^2$ \and Sylvain Spitz$^2$}
\address{
  $^2$Technische Universit\"at Berlin,
  Discrete Optimization
}
\email{$\{$joswig,klimm,spitz$\}$@math.tu-berlin.de}
\thanks{%
  Support by the Deutsche Forschungsgemeinschaft (DFG, German Research Foundation) under Germany's Excellence Strategy - The Berlin Mathematics Research Center MATH$^+$ (EXC-2046/1, project ID 390685689) gratefully acknowledged.
  M.~Joswig has further been supported by \enquote{Symbolic Tools in Mathematics and their Application} (TRR 195, project-ID 286237555).}
\subjclass[2020]{
  91B03,  
  (52B20, 
  14T15)  
}
\begin{document}
\definecolor{cubecolor}{RGB}{97, 155, 242} 
\definecolor{triangcolor}{rgb}{0.4666666667 0.9254901961 0.6196078431}
\definecolor{spancolor}{RGB}{237, 113, 85}

\ifipco
\maketitle
\fi

\begin{abstract}
  The difference set of an outcome in an auction is the set of types that the auction mechanism maps to the outcome.
  We give a complete characterization of the geometry of the difference sets that can appear for a dominant strategy incentive compatible multi-unit auction showing that they correspond to regular subdivisions of the unit cube.
  This observation is then used to construct mechanisms that are robust in the sense that the set of items allocated to a player does change only slightly when the player's reported type is changed slightly.
\end{abstract}

\ifipco
\else
\maketitle
\fi


\section{Introduction}
Mechanism design is concerned with the implementation of favorable social outcomes in environments where information is distributed and only released strategically.
Specifically, this article is concerned with multi-dimensional mechanism design problems where a set of $m$ items is to be allocated to a set of $n$ players. 
The attitude of each player for receiving a subset of the items is determined by the so-called \emph{type} of the player and is their private information and not available to the mechanism. In this setting, a mechanism elicits the types from the players, and---based on the reported types---decides on an allocation of the items to the players, and on a price vector that specifies the amount of money that the different players have to pay to the mechanism.
In order to incentivize the players to truthfully report their true types to the mechanism, one is interested in mechanisms that have the property that no matter what the other players report to the mechanism, no player can benefit from misreporting their type; mechanisms that enjoy this property are called \emph{dominant strategy incentive compatible}, short DSIC.
In this paper, we investigate the geometric properties of DSIC mechanisms. Because DSIC mechanisms require truthful reporting of the type no matter of the types declared by the other players, they can be characterized by the one-player mechanisms that arise when the declared valuations of the other players are fixed.  
 
As an example for a mechanism, consider the basic case of a combinatorial auction (see De Vries and Vohra~\cite{deVriesVohra2003} for surveys) where two items are sold to two players with additive valuations.
In that case each player~$i$ has a two-parameter type $\theta_i = (\theta_{i,1}, \theta_{i,2})$ where the scalar $\theta_{i,j}$ is the monetary equivalent that player~$i$ attaches to receiving item~$j$. For illustration, assume that player~$2$ reported $\theta_2' = (1,1)$ and consider the corresponding one-player mechanism for player~$1$. If each item~$j$ is sold independently to the bidder~$i$ with the highest reported type $\theta_{i,j}'$ (breaking ties in favor of player~$1$), we obtain that player~$1$ receives item $j$ if and only of $\theta_{1,j}' \geq 1$.
Geometrically, this one-player mechanism can be represented by its \emph{difference sets} $\smash{Q_S}$, $\smash{S \in 2^{\{1,2\}}}$ where $\smash{Q_S}$ is equal to the closure of the set of types reported by player~$1$ so that they get allocated the set of items $S$. The difference sets were introduced by Vohra~\cite[p.~41]{Vohra2011} and reveal valuable information about the properties of the mechanism; see Fig.~\ref{fig:difference-sets}. Fig.~\ref{fig:difference-sets1} shows the difference sets of the mechanism selling each item to the highest bidder; Fig.~\ref{fig:difference-sets2} and Fig.~\ref{fig:difference-sets3} show the difference sets of other DSIC mechanisms (not specified here). 

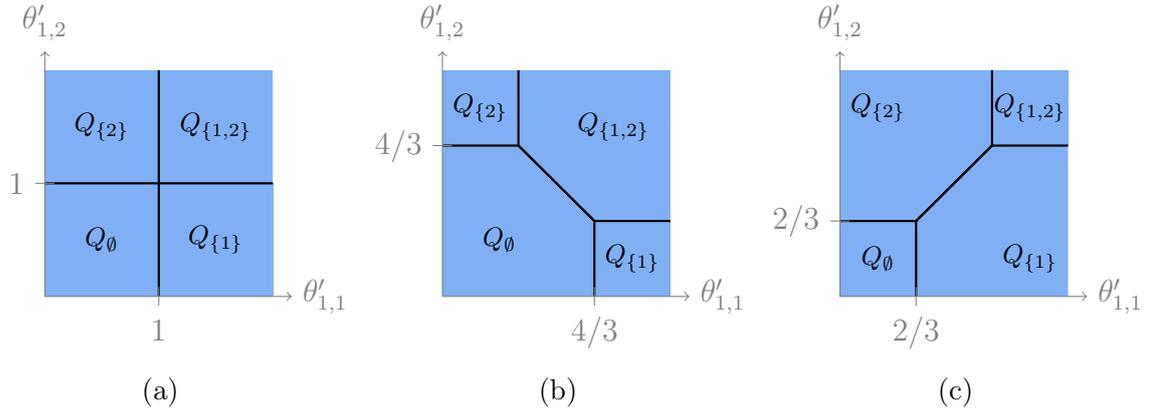
\begin{figure}[tb]
  \center 
  \ifipco
  \hspace{-1cm}
  \fi
  \begin{subfigure}{0.32\textwidth}
    \center
    \ifipco
    \begin{tikzpicture}[scale = .42]
    \else
    \begin{tikzpicture}[scale = .5]
    \fi
      \tikzstyle{facet} = [color=cubecolor, opacity=0.8];
      \tikzstyle{edge} = [thick];
      \tikzstyle{axis} = [->, color=gray];
   
      \coordinate (0) at (0,0);
      \coordinate (1) at (0,3);
      \coordinate (2) at (0,6);
      \coordinate (3) at (3,3);
      \coordinate (4) at (3,6);
      \coordinate (5) at (3,0);
      \coordinate (6) at (3,3);
      \coordinate (7) at (6,0);
      \coordinate (8) at (6,3);
      \coordinate (9) at (6,6);
      \fill[facet] (0)--(1)--(3)--(6)--(5)--(0)--cycle;
      \fill[facet] (1)--(2)--(4)--(3)--(1)--cycle;
      \fill[facet] (3)--(4)--(9)--(8)--(6)--(3)--cycle;
      \fill[facet] (5)--(6)--(8)--(7)--(5)--cycle;
      \draw[edge] (1)--(3)--(6)--(5);
      \draw[edge] (4)--(3)--(6)--(8);
      \draw[axis] (0,0) -- (6.5,0) node[right] {$\theta_{1,1}'$};
      \draw[axis] (0,0) -- (0,6.5) node[above] {$\theta_{1,2}'$};
      \draw[gray] (3,0.25) -- (3,-0.25) node[below] {\textcolor{gray}{$1\vphantom{/1}$}};
      \draw[gray] (0.25,3) -- (-0.25,3) node[left] {\textcolor{gray}{$\hphantom{1/}1$}};
   
      \node (r11) at (1.5,4.5) {\footnotesize $Q_{\{2\}}$};
      \node (r11) at (4.5,4.5) {\footnotesize $Q_{\{1,2\}}$};
      \node (r11) at (4.5,1.5) {\footnotesize $Q_{\{1\}}$};
      \node (r11) at (1.5,1.5) {\footnotesize $Q_{\emptyset}$};
    \end{tikzpicture}
    \caption{} \label{fig:difference-sets1}
  \end{subfigure}
  \begin{subfigure}{0.32\textwidth}
    \center
    \ifipco
    \begin{tikzpicture}[scale = .42]
    \else
    \begin{tikzpicture}[scale = .5]
    \fi
      \tikzstyle{facet} = [color=cubecolor, opacity=0.8];
      \tikzstyle{edge} = [thick];
      \tikzstyle{axis} = [->, color=gray];
    
      \coordinate (0) at (0,0);
      \coordinate (1) at (0,4);
      \coordinate (2) at (0,6);
      \coordinate (3) at (2,4);
      \coordinate (4) at (2,6);
      \coordinate (5) at (4,0);
      \coordinate (6) at (4,2);
      \coordinate (7) at (6,0);
      \coordinate (8) at (6,2);
      \coordinate (9) at (6,6);
      \fill[facet] (0)--(1)--(3)--(6)--(5)--(0)--cycle;
      \fill[facet] (1)--(2)--(4)--(3)--(1)--cycle;
      \fill[facet] (3)--(4)--(9)--(8)--(6)--(3)--cycle;
      \fill[facet] (5)--(6)--(8)--(7)--(5)--cycle;
      \draw[edge] (1)--(3)--(6)--(5);
      \draw[edge] (4)--(3)--(6)--(8);
      \draw[axis] (0,0) -- (6.5,0) node[right] {$\theta_{1,1}'$};
      \draw[axis] (0,0) -- (0,6.5) node[above] {$\theta_{1,2}'$};
      \draw[gray] (4,0.25) -- (4,-0.25) node[below] {\textcolor{gray}{$4/3$}};
      \draw[gray] (0.25,4) -- (-0.25,4) node[left] {\textcolor{gray}{$4/3$}};
    
      \node (r11) at (1,5) {\footnotesize $Q_{\{2\}}$};
      \node (r11) at (4.5,4.5) {\footnotesize $Q_{\{1,2\}}$};
      \node (r11) at (5,1) {\footnotesize $Q_{\{1\}}$};
      \node (r11) at (1.5,1.5) {\footnotesize $Q_{\emptyset}$};
    \end{tikzpicture}
    \caption{} \label{fig:difference-sets2}
  \end{subfigure}
  \begin{subfigure}{0.32\textwidth}
    \center
    \ifipco
    \begin{tikzpicture}[scale = .42]
    \else
    \begin{tikzpicture}[scale = .5]
    \fi
      \tikzstyle{facet} = [color=cubecolor, opacity=0.8];
      \tikzstyle{edge} = [thick];
      \tikzstyle{axis} = [->, color=gray];
  
      \coordinate (0) at (0,0);
      \coordinate (1) at (0,2);
      \coordinate (2) at (0,6);
      \coordinate (3) at (2,2);
      \coordinate (4) at (4,6);
      \coordinate (5) at (2,0);
      \coordinate (6) at (4,4);
      \coordinate (7) at (6,0);
      \coordinate (8) at (6,4);
      \coordinate (9) at (6,6);
      \fill[facet] (0)--(1)--(3)--(5)--(0)--cycle; 
      \fill[facet] (1)--(2)--(4)--(6)--(3)--(1)--cycle; 
      \fill[facet] (4)--(9)--(8)--(6)--cycle; 
      \fill[facet] (5)--(3)--(6)--(8)--(7)--(5)--cycle; 
      \draw[edge] (1)--(3)--(6)--(8);
      \draw[edge] (4)--(6)--(3)--(5);
      \draw[axis] (0,0) -- (6.5,0) node[right] {$\theta_{1,1}'$};
      \draw[axis] (0,0) -- (0,6.5) node[above] {$\theta_{1,2}'$};
      \draw[gray] (2,0.25) -- (2,-0.25) node[below] {\textcolor{gray}{$2/3$}};
      \draw[gray] (0.25,2) -- (-0.25,2) node[left] {\textcolor{gray}{$2/3$}};
  
      \node (r11) at (1,5) {\footnotesize $Q_{\{2\}}$};
      \node (r11) at (5,5) {\footnotesize $Q_{\{1,2\}}$};
      \node (r11) at (5,1) {\footnotesize $Q_{\{1\}}$};
      \node (r11) at (1,1) {\footnotesize $Q_{\emptyset}$};
    \end{tikzpicture}
    \caption{} \label{fig:difference-sets3}
  \end{subfigure}
  \ifipco \vspace{1em} \fi
  \caption{
    \label{fig:difference-sets}
    Difference sets of several mechanisms; cf.\ \cite[Fig.~1]{Vidali09}.}
\end{figure}

Under reasonable assumptions, the difference sets form a polyhedral decomposition of the type space. In this paper, we are interested in characterizing their polyhedral geometry.
%
This is a continuation of work of Vidali~\cite{Vidali09} who showed that the two combinatorial types shown in Fig.~\ref{fig:difference-sets2}~and~\ref{fig:difference-sets3} are the only cases that can appear for a DSIC mechanism for two items (where the combinatorial type in Fig.~\ref{fig:difference-sets1} is a common degenerate case of both).
She then also provided a similar characterization of the combinatorial types that can appear for three items and asked how these findings can be generalized to more items.

Characterizing the combinatorial types of mechanisms is interesting for a variety of reasons. First, such geometric arguments are often used in order to characterize the set of allocation functions that are implementable by a DSIC mechanism. For instance, the difference sets whose closures have nonempty intersection correspond exactly to two-cycles in an auxiliary network used by Rochet~\cite{Rochet87} in order to characterize the allocation functions that are implementable by DSIC mechanisms.
\ifshort
\else
More recently, Edelman and Weymark~\cite{Edelman2021} showed that under certain conditions it is necessary that the length of all $2$-cycles is zero.
As we show, the $2$-cycles in the allocation network correspond exactly to those allocations $A$ and $A'$ such that the intersections of $Q_{A}$ and $Q_{A'}$ are nonempty. Examining the different mechanisms shown in Fig.~\ref{fig:difference-sets} shows that different combinatorial types give rise to different nonempty intersections and, hence, different necessary conditions on implementability.
\fi
Second, the combinatorial types can be used to study the sensitivity of mechanisms to deviations in the reported types. As an example consider the mechanism in Fig.~\ref{fig:difference-sets1}. For any $\epsilon > 0$, reporting the type $(1-\epsilon, 1-\epsilon)$ yields no item for player~$1$ while the report of the type $(1+\epsilon, 1+\epsilon)$ grants them both items. Put differently, a small change in the reported type may change the outcome from no items being allocated to player~$1$ to all items being allocated to the same player. This is in contrast with the mechanism shown in Fig.~\ref{fig:difference-sets3} where a small change in the reported size may change the cardinality of the set of allocated items only by $1$.
Third, the combinatorial types of the mechanism are relevant for the efficiency of the mechanism; see, e.g.,~\cite{Christodoulou08,Vidali09}.

\newcommand\ourresultssec{Our results}
\ifipco
\paragraph{\bf \ourresultssec.}
\else
\subsection{\ourresultssec}
\fi
We give a complete characterization of the combinatorial types of all DSIC combinatorial auctions with $m$ items for any value of $m$. This answers of Vidali~\cite{Vidali09} on how to generalize her results for $m =2$ and $m=3$ to larger values of $m$.
We employ methods from polyhedral geometry \cite{Triangulations} and tropical combinatorics \cite{ETC}.
Our results rest on the observation that a multi-player mechanism is DSIC if and only if its single-player components are DSIC; see \cite{SaksYu05}.
We show that for $m$ items, the combinatorial types of those single-player components are in bijection to equivalence classes of regular subdivisions of the $m$-dimensional unit cube.
We identify the relevant symmetries for exchangeable items and conclude that there are exactly 23 nondegenerate combinatorial types for $m=3$ and 3{,}706{,}261 such types for $m=4$ (Theorem~\ref{thm:TriangCube}).
We then use this characterization to study the optimal sensitivity of mechanisms to slight changes in the reported types. Specifically, we show that for any number of items $m$, there is a one-player combinatorial auction so that the cardinality of the set of items received by the player changes by at most $1$ when the reported type is slightly perturbed (Proposition~\ref{prop:CardStab}). We also give bounds on a similar measure involving the Hamming distance of the set of received items (Proposition~\ref{pro:hamming}).  
\ifshort
In the full version of this paper 
we further show how to apply the same methodology in order to classify the combinatorial types of affine maximizers with $n$ players.
\else
We further show how to apply the same methodology in order to classify the combinatorial types of affine maximizers, a special class of mechanism.
More precisely, we show that, for $m$ items and $n$ players, the combinatorial types of these affine maximizers are in bijection to the regular subdivisions of the $m$-fold product of the $(n-1)$-dimensional simplex.

It turns out that there are exactly five combinatorial types of nondegenerate affine maximizers for the case of $m=2$ items and $n=3$ players; the corresponding count for $(m,n)=(2,4)$ again reads 7{,}869 (Theorem~\ref{thm:TriangSimplices}).
\fi


\newcommand\furtherrelatedsec{Further related work}
\ifipco
\paragraph{\bf \furtherrelatedsec.}
\else
\subsection{\furtherrelatedsec}
\fi

Rochet's Theorem~\cite{Rochet87} states that an allocation function is implementable by a DSIC mechanism if and only if the allocation networks of the corresponding one-player mechanisms have no finite cycles of negative lengths. There is a substantial stream of literature exhibiting conditions where it is enough to require conditions on shorter cycles \cite{Archer2014,Ashlagi2010,Berger2017,Bikhchandani2006,Carbajal2015,Edelman2021,Kushnir2021}; for instance, it suffices to require the nonnegativity for cycles of length $2$ when the preferences are single-peaked \cite{Mishra2014} or when the type-space is convex \cite{SaksYu05}.
Roberts~\cite{Roberts} showed that when the type space of all players is $\RR^{\Omega}$, then only affine maximizers are implementable by a DSIC mechanism.
Gui et al.~\cite{Gui2004} and Vohra~\cite{Vohra2011} studied the difference sets $Q_A$ of a mechanism and showed that under reasonable assumptions their closures are polyhedra. Vidali~\cite{Vidali09} studied the geometry of the polyhedra for the case of two and three items.

In recent years, tropical geometric methods proved to be useful for algorithmic game theory and lead to new results in mechanism design \cite{BaldwinKlemperer,NgocLin19,NgocYu19}, mean payoff games \cite{MeanPayoff09,MeanPayoff14}, linear optimization \cite{ABGJ:SIREV} and beyond.
Beyond the scope of combinatorial auctions, our results are also applicable to the mechanism design problem of scheduling on unrelated machines \cite{Christo22,ChristodoulouEtAl09,DobzinskiS20,Giannakopoulos2021,NisanRonen}.

\section{Preliminaries}
\label{sec:prelim}

In this section, we give a brief overview of basic concepts from mechanism design theory, polyhedral geometry and tropical combinatorics used in this paper.
For a more comprehensive treatment we refer to \cite{GameTheory}, \cite{Triangulations} and \cite{ETC,Tropical+Book}.

\newcommand\mechanismdesignsec{Mechanism design}
\ifipco
\paragraph{\bf \mechanismdesignsec.}
\else
\subsection{\mechanismdesignsec}
\fi
A \emph{multi-dimensional mechanism design problem} consists of a finite set $[m] \coloneqq \{1, \dots, m\}$ of \emph{items} and a finite set $[n]\coloneqq \{1, \dots, n\}$ of players. 
Every player~$i$ has a set of possible \emph{types} $\Theta_i \subseteq \mathbb{R}^m$ where for $\theta_i = (\theta_{i,1},\dots,\theta_{i,m}) \in \Theta_i$ the value $\theta_{i,j}$, $j \in [m]$ is the monetary value player~$i$ attaches to the fact of receiving item~$j$. 
A vector $\theta = (\theta_1,\dots,\theta_n)$ with $\theta_i \in \Theta_i$ for all $i \in [n]$ is called a \emph{type vector} and $\Theta = \Theta_1 \times \dots \times \Theta_n$ is the space of all type vectors. The type~$\theta_i$ is the private information of player~$i$ and unknown to all other players $j \neq i$ and the mechanism designer.
Let 
\[
  \Omega \ = \ \Bigg\{A \in \{0,1\}^{n \times m} \; \Bigg\vert \; \sum_{i \in [n]} A_{i,j}=1 \text{ for all } j \in [m]\Bigg\}
\]
be the set of allocations of the $m$ items to the $n$ players.
Here, the $i$-th row $A_i$ of an allocation matrix $A \in \Omega$ corresponds to the allocation for the $i$-th player.
A \emph{(direct revelation) mechanism} is a tuple $M = (f,p)$ consisting of an \emph{allocation function} \ifipco \else \linebreak \fi $f : \Theta \to \Omega$ and a \emph{payment function} $p : \Theta \to \RR^n$.
The mechanism first elicits a claimed type vector $\theta' = (\theta'_1,\dots,\theta_n') \in \Theta$ where $\theta_i' \in \Theta_i$ is the type reported by player~$i$. It then chooses an alternative $f(\theta')$ and payments $p(\theta')  = (p_1(\theta'),\dots,p_n(\theta')) \in \RR^n$ where $p_i(\theta')$ is the payment from player~$i$ to the mechanism.
We assume that the players' utilities are \emph{quasi-linear} and the valuation is \emph{additive}, i.e., the utility of player~$i$ with type $\theta_i$ when the type vector reported to the mechanism is $\theta'$ is
\ifipco
$u_i(\theta' \mid \theta_i) = f_i(\theta')\cdot \theta_i - p_i(\theta')$,
\else
\begin{align*}
u_i(\theta' \mid \theta_i) \ = \ f_i(\theta') \cdot \theta_i - p_i(\theta')\enspace ,
\end{align*}
\fi
where $f_i(\theta') \in \{0,1\}^m$ is the characteristic vector of the items allocated to player~$i$ when the reported type vector is $\theta'$ and $f_i(\theta')\cdot \theta_i$ denotes the scalar product.
A direct revelation mechanism is called \emph{dominant strategy incentive compatible} (DSIC) or \emph{truthful} if 
\ifipco
$u_i(\theta \mid \theta_i) \geq u_i((\theta_i', \theta_{-i}) \mid \theta_i)$,
\else
\begin{align}
\label{eq:ic}\tag{IC}
u_i(\theta \mid \theta_i) \ \geq \ u_i((\theta_i', \theta_{-i}) \mid \theta_i)\enspace,
\end{align}
\fi
for all $i \in [n]$, $\theta \in \Theta$,  and $\theta' \in \Theta_i$. Here and throughout, $(\theta_i', \theta_{-i})$ denotes the type vector where player~$i$ reports~$\theta_i'$ and every other player~$j$ reports $\theta_j$ as in $ \theta$.
\ifshort
\else
An allocation function is \emph{truthfully implementable}, or just \emph{truthful}, (in weakly dominant strategies) if there is an incentive compatible direct revelation mechanism $M = (f,p)$.\footnote{There is a more general notion of a mechanism where players may report arbitrary objects to the mechanism instead of only their types and typically implementability is defined with respect to these more general mechanisms. However, the revelation principle due to Myerson~\cite{Myerson81} shows that $f$ is implementable (in weakly dominant strategies) by an arbitrary mechanism, if and only if  it is implementable (in weakly dominant strategies) by a directed revelation mechanism.}
\fi

For an allocation $A \in \Omega$, let $R_{A}~=~\{\theta \in \Theta \mid  f(\theta) = A\}$ be the preimage of $A$ under $f$, and let $Q_{A} = \cl(R_{A})$ be the topological closure of $R_{A}$.   
We call $Q_{A}$ the \emph{difference set} of $A$.

%
%
%
\newcommand\tropicalsec{Polyhedral geometry and tropical combinatorics}
\ifipco
\paragraph{\bf \tropicalsec.}
\else
\subsection{\tropicalsec}
\fi
\ifipco We consider \else Consider \fi the \emph{max-tropical semiring} $(\TT, \oplus, \odot)$ with $\TT \coloneqq \RR \cup \{-\infty\}$, $a \oplus b \coloneqq \max\{a, b\}$ and $a \odot b \coloneqq a + b$.
Picking coefficients $\lambda_u\in\TT$ for $u\in\ZZ^m$ such that only finitely many are distinct from $-\infty$ defines an $m$-variate tropical (Laurent) polynomial, $p$, whose evaluation at $x \in \RR^m$ reads
\begin{equation}\label{eq:trop+poly}
 p(x) \ = \ \bigoplus_{u \in \ZZ^m} \lambda_u \odot x^{\odot u} \ = \ \max\SetOf{ \lambda_u + x \cdot u}{u \in \ZZ^m} \enspace .
\end{equation}
The \emph{support} of $p$ is the set $\supp(p)=\SetOf{u \in \ZZ^m}{\lambda_u \neq -\infty}$.
The \emph{tropical hypersurface} $V(p)$ is the set of points $x \in \RR^m$ such that the maximum in \eqref{eq:trop+poly} is attained at least twice.
The tropical hypersurface partitions the set $\RR^m\setminus V(p)$ into sets in which the maximum of \eqref{eq:trop+poly} is attained exactly once, for some fixed $u\in\ZZ^m$; see Fig.~\ref{fig:tropVar}.
Taking the closure of such a part, we get a \emph{region} of $V(p)$, which is a (possibly unbounded) polyhedron.
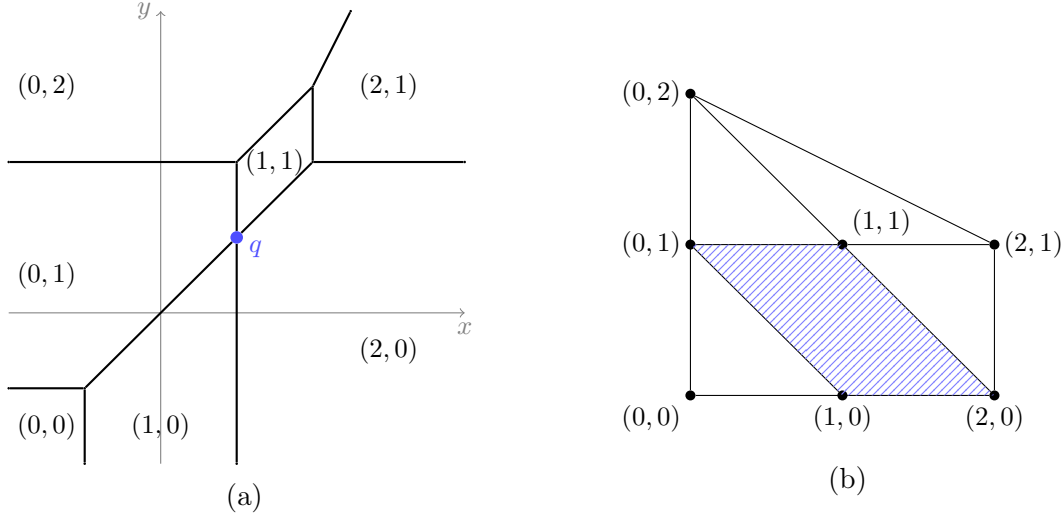
\begin{figure}[tb]
  \center
  \begin{subfigure}{0.49\textwidth}
    \center
    \ifipco
    \begin{tikzpicture}[scale=0.85]
      \scriptsize
    \else
    \begin{tikzpicture}
      \footnotesize
    \fi
      \tikzstyle{axis} = [->, color=gray];
      \tikzstyle{trop} = [thick];
      \tikzstyle{u} = [circle, scale=0.1, fill=black];
      \draw[axis] (-2,0) -- (4,0);
      \draw[axis] (0,-2) -- (0,4);
      \node[below, color=gray] at (4,0) {$x$};
      \node[left, color=gray] at (0,4) {$y$};
      
      \node (1) at (0, 0) {};
      \node[u] (2) at (1, 2) {};
      \node[u] (3) at (-1, -1) {};
      \node[circle, scale=0.5, fill=blue!70] (4) at (1, 1) {};
      \node[u] (5) at (2, 2) {};
      \node[u] (6) at (2, 3) {};
      \node[u] (b1) at (1,-2) {};
      \node[u] (b2) at (-1,-2) {};
      \node[u] (b3) at (-2,-1) {};
      \node[u] (b4) at (-2,2) {};
      \node[u] (b5) at (4,2) {};
      \node[u] (b6) at (2.5,4) {};
      
      \node at (-1.5, -1.5) {$(0,0)$};
      \node at (-0, -1.5) {$(1,0)$};
      \node at (-1.5, 0.5) {$(0,1)$};
      \node at (3, -.5) {$(2,0)$};
      \node at (-1.5, 3) {$(0,2)$};
      \node at (1.5, 2) {$(1,1)$};
      \node at (3, 3) {$(2,1)$};
      \node[color=blue!70] at (1.25,0.85) {$q$};
    
      \draw[trop] (b2)--(3)--(4)--(2)--(6)--(5)--(4)--(b1);
      \draw[trop] (b3)--(3);
      \draw[trop] (b4)--(2);
      \draw[trop] (b5)--(5);
      \draw[trop] (b6)--(6);
    \end{tikzpicture}
    \caption{} \label{fig:tropVar}
  \end{subfigure}
  \begin{subfigure}{0.49\textwidth}
    \center
	  \vspace{.8cm}
    \ifipco
    \begin{tikzpicture}[scale = 1.7]
      \scriptsize	
    \else
    \begin{tikzpicture}[scale = 2]
      \footnotesize
    \fi
      \tikzstyle{u} = [circle, scale=0.4, fill=black];
      \tikzstyle{v} = [];
      \node[u] (0) at (0,0) {};
      \node[u] (1) at (1,0) {};
      \node[u] (2) at (0,1) {};
      \node[u] (3) at (2,0) {};
      \node[u] (4) at (1,1) {};
      \node[u] (5) at (0,2) {};
      \node[u] (6) at (2,1) {};
      
      \node[below left] at (0,0) {$(0,0)$};
      \node[below] at (1,0) {$(1,0)$};
      \node[left] at (0,1) {$(0,1)$};
      \node[below] at (2,0) {$(2,0)$};
      \node[left] at (0,2) {$(0,2)$};
      \node[above right] at (1,1) {$(1,1)$};
      \node[right] at (2,1) {$(2,1)$};
      
      \draw[v] (0)--(3)--(6)--(5)--(0);
      \draw[v] (1)--(2)--(4)--(3);
      \draw[v] (5)--(4)--(6);
      
      \fill[pattern=north east lines, pattern color=blue!50] (0,1)--(1,0)--(2,0)--(1,1)--cycle;
	  \end{tikzpicture}
	  \vspace{.1cm}
    \caption{}
  \end{subfigure}
  \caption{(a) Tropical hypersurface $V(p)$ and (b) dual regular subdivision of $\supp(p)$, where $p(x, y) = \max \{0, x+1, y+1, 2x, x+y, 2y-1, 2x+y -2\}$.
   In (a) regions are marked by their support vectors; in (b) the same labels mark vertices of the subdivision.
   Conversely, e.g., the blue quadrangular cell on the right is dual to the vertex $q$ on the left.} \label{fig:tropDual}
\end{figure}
The \emph{Newton polytope} $\cN(p)=\conv(\supp(p))$ of a tropical polynomial~$p$ is the convex hull of its support.
\ifipco
\else
In the present work tropical hypersurfaces often occur in the guise of their dual regular subdivisions.
This requires some more details.
\fi
A finite set $\cC$ of polyhedra in $\RR^m$ is a \emph{polyhedral complex}, if it is closed with respect to taking faces and if for any two $P,Q \in \cC$ the intersection $P \cap Q$ is a face of both $P$ and $Q$.
\ifipco
\else
Note that the empty set is considered to be a face of each polyhedron.
If the maximal polyhedra in $\cC$ happen to share the same dimension, the complex $\cC$ is called \emph{pure}.
\fi
The polyhedra in $\cC$ are the \emph{cells} of $\cC$.
If every polyhedron in $\cC$ is bounded, it is a \emph{polytopal complex}.
Further, given a finite set of points $U \subset \RR^m$ and a polytopal complex $\cC$ in $\RR^m$, $\cC$ is a \emph{polytopal subdivision} of $U$ if the vertices of all polytopes in $\cC$ are points in $U$ and if the union of all polytopes in $\cC$ is the convex hull of the points in $U$.
If the cells of a polytopal subdivision are all simplices, it is called a \emph{triangulation}.
\ifipco
A subdivision of $U$ is called \emph{regular}, if it can be obtained via a lifting function $\lambda:U \rightarrow \RR$ on $U$. 
Formally, let $P(U,\lambda) \coloneqq \conv \bigl\{(u, \lambda(u)) \in \RR^{m+1} \,\big\vert\, u \in U\bigr\}$ be the lifted polytope. 
Its \emph{upper faces} have an outer normal vector with positive last coordinate.
Projecting these upper faces by omitting the last coordinate yields a polytopal subdivision of $U$ that is called the regular subdivision of $U$ induced by $\lambda$.
\else
\begin{definition}[Regular Subdivision] \label{def:RegSubd}
 Let $U \subset \RR^m$ be a finite set of points and let $\lambda:U \rightarrow \RR$ be a lifting function on $U$. Consider the \emph{lifted polytope}
 \[
  P(U,\lambda) \ \coloneqq \ \conv \biggSetOf{(a, \lambda(a)) \in \RR^{m+1}}{a \in U} \enspace .
 \] 
 Its \emph{upper faces} have an outer normal vector with positive last coordinate.
 Projecting these upper faces by omitting the last coordinate yields a polytopal subdivision of $U$.
 This is called the \emph{regular subdivision} of $U$ induced by $\lambda$.
\end{definition}
\fi
The following proposition explains the duality between a tropical hypersurface and the regular subdivision of its support.
Here we identify a polytopal subdivision with its finite set of cells, partially ordered by inclusion.
A proof can be found in \cite[Theorem 1.13.]{ETC}; see Fig.~\ref{fig:tropVar} for an example.

\begin{proposition} \label{prop:TropDual}
 Let $p = \max\SetOf{ \lambda_u + x \cdot u}{u \in \ZZ^m}$ be a tropical Laurent polynomial.
 Then there is an inclusion reversing bijection between the regular subdivision of the support of $p$ with respect to $\lambda(u) = \lambda_u$ and the polyhedral complex induced by the regions of $V(p)$.
\end{proposition}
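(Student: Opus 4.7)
The plan is to make the bijection explicit via a common lift to $\RR^{m+1}$: both the regular subdivision of $U \coloneqq \supp(p)$ and the polyhedral complex of regions of $V(p)$ record the same combinatorial data, namely that of the lifted polytope $P(U,\lambda)$ read from its two dual sides.

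First I would observe that for any $x \in \RR^m$, the linear functional $(y,t) \mapsto x \cdot y + t$ attains its maximum over $P(U,\lambda)$ on a face whose vertices are exactly the points $(u,\lambda_u)$ with $u \in \argmax_{v \in U}(\lambda_v + x \cdot v)$. Since this functional has positive last coordinate, its maximizing face is an upper face of $P(U,\lambda)$, and thus projects to a cell $\sigma(x)$ of the regular subdivision of $U$. Dually, the cell of the tropical polyhedral complex of $V(p)$ whose relative interior contains $x$ is (the closure of) the locus $\bigSetOf{x' \in \RR^m}{\sigma(x') \supseteq \sigma(x)}$.

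This suggests defining the candidate bijection by $\sigma \mapsto C_\sigma$, where
\[
  C_\sigma \ = \ \bigSetOf{x \in \RR^m}{u \in \argmax_{v \in U}(\lambda_v + x \cdot v) \text{ for every vertex } u \text{ of } \sigma} .
\]
Unfolding the definition expresses $C_\sigma$ as an intersection of finitely many linear equalities and weak inequalities, hence it is a closed polyhedron. \emph{Inclusion-reversal} is then immediate: if $\sigma \subseteq \tau$, then every vertex of $\sigma$ is a vertex of $\tau$, and so $C_\tau \subseteq C_\sigma$. For \emph{dimension complementarity} $\dim C_\sigma + \dim \sigma = m$, the equations defining the affine hull of $C_\sigma$ involve precisely the affinely independent directions $u_i - u_0$ spanning $\sigma$; regularity of the subdivision then guarantees that the inequalities cut out a cell of maximal dimension within this affine hull. \emph{Bijectivity} is verified by inversion: for any $x$ in the relative interior of a cell $C$ of the tropical polyhedral complex, the cell $\sigma(x)$ is well defined and independent of the choice of $x$, yielding the inverse map.

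The technical heart of the argument is showing that the family $\{C_\sigma\}$ actually forms a polyhedral complex, i.e., that $C_\sigma \cap C_\tau$ is a face of each. By the argmax description, a point lies in $C_\sigma \cap C_\tau$ iff its argmax contains both vertex sets; since the argmax at any point is precisely the vertex set of a single cell of the regular subdivision, the intersection equals $C_\rho$ for the smallest cell $\rho$ containing both $\sigma$ and $\tau$ (and is empty if no such cell exists). This is the main obstacle I anticipate. The remaining issues, including the correspondence between unbounded cells of $V(p)$ and boundary cells of the subdivision, and the verification that closures behave correctly, follow along standard lines as in \cite{ETC}.
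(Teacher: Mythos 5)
The paper itself does not prove this proposition; immediately before the statement it says the proof can be found in \cite[Theorem 1.13]{ETC}, and afterwards it only gives an informal description of how the bijection acts. Your sketch is a correct reconstruction of that standard argument: you pass to the lifted polytope $P(U,\lambda)$, observe that the functional $(y,t)\mapsto x\cdot y + t$ selects an upper face whose projection is $\sigma(x)$, define $C_\sigma$ as the locus where all vertices of $\sigma$ lie in the argmax, and then check that $\sigma\mapsto C_\sigma$ is an inclusion-reversing poset isomorphism with $\dim C_\sigma + \dim\sigma = m$ and with $\{C_\sigma\}$ forming a polyhedral complex because $C_\sigma\cap C_\tau = C_\rho$ for the smallest cell $\rho$ containing $\sigma\cup\tau$. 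This is the proof in \cite{ETC}, so you have recovered it rather than found a genuinely different route.

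One small imprecision worth fixing: you assert that ``the argmax at any point is precisely the vertex set of a single cell of the regular subdivision.'' In general the argmax at $x$ equals $\sigma(x)\cap U$, the set of \emph{all} lattice points of $U$ contained in the cell $\sigma(x)$, and this may strictly contain the vertex set of $\sigma(x)$; for $U=\{0,1,2\}\subset\RR$ with $\lambda\equiv 0$, the argmax at $x=0$ is all of $U$ while the unique full cell of the subdivision has vertex set $\{0,2\}$. This slip does not affect your argument for $C_\sigma\cap C_\tau = C_\rho$, since requiring all vertices of both $\sigma$ and $\tau$ to lie in the argmax is equivalent to $\sigma(x)\supseteq\sigma$ and $\sigma(x)\supseteq\tau$, but the wording should be amended.
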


\ifipco
\else
The inclusion reversion of the bijection can be seen in the following way.
The bijection maps a vertex $u$ of $\cN(p)$ to the region, where $u$ is the support vector maximizing $p$.
Further, if the intersection of a tuple of regions is nonempty, it gets mapped to the convex hull of the corresponding support vectors in the Newton polytope.
An example can be seen in Fig.~\ref{fig:tropDual}.
\fi

\section{Characterization of One-Player Mechanisms}
\label{sec:CombsOfAlls}

In many applications, the intersections of difference sets $Q_A$ are special, as the mechanism is essentially indifferent between the outcomes and uses a tie-breaking rule or random selection to determine the outcome.
Observing which difference sets intersect and which do not gives rise to a combinatorial pattern which we attribute to the allocation function.
We want to study these patterns in order to classify which of them can be attributed to truthfulness. 
We express such a pattern as an abstract simplicial complex over the allocation space and call it the \emph{indifference complex} of the allocation function.

Formally, an \emph{abstract simplicial complex} over some finite set $E$ is a nonempty set family $\cS$ of subsets of $E$, such that for any set $S \in \cS$ and any subset $T \subseteq S$, we also have $T \in \cS$.
The elements of an abstract simplicial complex are called \emph{faces}.
The dimension of $\cS$ is the maximal cardinality of any face, minus one.

\begin{definition}[Indifference Complex]
	The \emph{indifference complex} $\cI(f)$ of an allocation function $f$ is the abstract simplicial complex defined as
	\[
		\cI(f) \ = \ \bigSetOf{\cO \subseteq \Omega}{\bigcap_{A \in \cO} Q_{A} \neq \emptyset} \enspace .
	\]
\end{definition}
Note that the indifference complex $\cI(f)$ is precisely the nerve complex of the family of difference sets of $f$; see \cite[\S10]{Bjorner:1995}.
We call an allocation function $f$ \emph{implementable} if there is a DSIC mechanism $M = (f,p)$ and we call an indifference complex $\cI$  implementable, if there is an implementable allocation function $f$ such that $\cI(f) = \cI$.

We define the \emph{local allocation function} of player $i$ for a given type vector $\theta_{-i}$ to be $f_{i,\theta_{-i}}(\theta_i) = A_i$, where $A_i$ is the $i$-th row of $A = f(\theta_i, \theta_{-i})$.
Further, let us fix a payment vector $p \in \RR^{2^m}$, which we index by allocations $a \in \{0,1\}^m$.
Then, for any type $\theta_i \in \RR^m$, we let 
\ifipco
$u_{p}(\theta_i) = \max \bigl\{\theta_i \cdot a - p_a \,\big\vert\, a \in \{0,1\}^m\bigr\}$.
\else
\[
  u_{p}(\theta_i) \ = \ \max\biggSetOf{\theta_i \cdot a - p_a}{a \in \{0,1\}^m} \enspace .
\]
\fi
The resulting function $u_p:\RR^m\to\RR$ is a max-tropical polynomial of degree $m$.
We refer to the allocation in $\{0,1\}^m$ which maximizes $u_p(\theta_i)$ as $\argmax u_p(\theta_i)$.
We restate \cite[Proposition 9.27]{GameTheory}, which says that in a truthful setting, the local allocation functions are defined by such tropical polynomials, where the vector $p$ depends only on the types of the other players.
\begin{proposition} \label{prop:TruthTrop}
  The allocation function $f$ is truthful, if and only if for all players $i \in [n]$ and all type vectors $\theta \in \Theta$, there exists a payment vector $p_i(\theta_{-i}) \in \RR^{2^m}$, such that $f_{i,\theta_{-i}}(\theta_i) \in \argmax u_{p_i(\theta_{-i})}(\theta_i)$.
\end{proposition}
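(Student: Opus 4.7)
The plan is a two-direction argument realizing the standard taxation-principle reformulation of truthfulness.

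For the ``if'' direction, I would assume the payment vectors $p_i(\theta_{-i}) \in \RR^{2^m}$ with the stated argmax property exist and define the payment function of the mechanism by $p_i(\theta) \coloneqq (p_i(\theta_{-i}))_{f_i(\theta)}$, i.e.\ the entry of the given vector corresponding to the allocation selected for player~$i$. Substituting the deviation $a = f_i(\theta_i', \theta_{-i})$ into the argmax inequality at type $\theta_i$ then yields
\[
  f_i(\theta)\cdot \theta_i - p_i(\theta) \ \geq \ f_i(\theta_i',\theta_{-i})\cdot \theta_i - p_i(\theta_i',\theta_{-i})\enspace,
\]
which is precisely \eqref{eq:ic}.

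For the ``only if'' direction, I suppose $f$ is truthful via some payment function $p$, fix $i$ and $\theta_{-i}$, and start by observing that $p_i(\theta_i,\theta_{-i})$ depends on $\theta_i$ only through $f_i(\theta_i,\theta_{-i})$: if two types $\theta_i, \theta_i'$ induce the same allocation for player~$i$, then applying \eqref{eq:ic} with $\theta_i'$ as a deviation from $\theta_i$ and once the other way around forces the two payments to coincide. Hence for each allocation $a$ in the image of $f_{i,\theta_{-i}}$ the common payment defines a real number $\pi_a$, and I set the corresponding entries of $p_i(\theta_{-i})$ equal to these $\pi_a$. For deviations $a$ in the image, the requested argmax condition is then literally the restriction of \eqref{eq:ic} to deviations that the mechanism actually realizes somewhere.

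The step I expect to be the main technical obstacle is filling the remaining entries of $p_i(\theta_{-i})$, indexed by allocations $a \in \{0,1\}^m$ that are not attained by $f_{i,\theta_{-i}}$: these entries must be chosen large enough that no type $\theta_i$ strictly prefers such an $a$ in the argmax. Under a boundedness assumption on $\Theta_i$ any sufficiently large uniform constant works; in the unbounded case one either restricts attention to the effective index set given by the image (noting that the argmax only needs to contain the selected allocation, not exhaustively list the alternatives) or uses the convention that missing allocations carry a $+\infty$ entry, which drops the corresponding term out of the tropical maximum defining $u_p$. Once these entries are chosen, the argmax condition over all of $\{0,1\}^m$ is exactly \eqref{eq:ic}, finishing the characterization.
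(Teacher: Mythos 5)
The paper itself does not supply a proof of this proposition; it is quoted from \cite[Proposition 9.27]{GameTheory}. So there is no in-paper argument to compare against, and your sketch is filling a gap the authors delegate to a reference. Both directions of your taxation-principle argument are correct as far as they go: the ``if'' direction defines $p_i(\theta)$ by reading off the entry of the given vector at $f_i(\theta)$ and recovers \eqref{eq:ic} by substitution, and the ``only if'' direction correctly uses two applications of \eqref{eq:ic} (cross-deviations between types with the same induced allocation) to show the payment depends on $\theta_i$ only through $f_{i,\theta_{-i}}(\theta_i)$, which yields the well-defined entries $\pi_a$ for attained allocations.

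The step you flag as the ``main technical obstacle'' is indeed the sensitive point, and worth being precise about: if some bundle $a\in\{0,1\}^m$ is never output by $f_{i,\theta_{-i}}$ and $\Theta_i$ is unbounded, then no finite entry $p_a$ can make the argmax condition hold for all $\theta_i$. For instance with $m=1$, $\Theta_i=\RR$ and the never-allocate mechanism $f\equiv 0$ (truthful with $p\equiv 0$), one would need $p_1-p_0\geq\theta_i$ for all $\theta_i\in\RR$, which fails for any $p_1\in\RR$. Your $+\infty$ sentinel is the natural tropical fix (it removes $a$ from $\supp(u_p)$), but it does not literally match the statement, which demands $p_i(\theta_{-i})\in\RR^{2^m}$; one needs either bounded type spaces, or an assumption that every allocation is attainable, or a relaxation to $\TT^{2^m}$. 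The same hidden assumption is carried into the paper's Theorem~\ref{thm:TriangCube}, whose proof relies on $\cN(u_p)=[0,1]^m$, i.e.\ on all $2^m$ coefficients being finite; if some coefficients were $-\infty$ tropically, one would instead obtain a regular subdivision of a proper lattice subpolytope of the cube. So your sketch is sound, and the caveat you raised is a genuine hypothesis that both the cited proposition and its downstream use in the paper are tacitly invoking.
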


An important consequence of Proposition~\ref{prop:TruthTrop} is that an allocation function $f$ is truthful if and only if all of its local function $f_{i,\theta_{-i}}$ are truthful.
Therefore, for the remainder of this section, we fix a player~$i$ and the type vector $\theta_{-i}$ of the other players and consider the corresponding one-player mechanism for player~$i$.
Equivalently, we assume that $n=1$ and that the allocation function $f$ chooses a single-player allocation $a \in \Omega = \{0,1\}^m$.

Next, we want to point out the relationship between the indifference complex and the \emph{allocation network}, which is a tool often used to analyze the truthfulness of allocation functions.
It is the weighted complete directed graph $G_f$ with a node for each allocation $a \in \Omega$ and with arc lengths 
\ifipco
$	\ell(a, a') = \inf_{\theta \in \Theta : \theta \in R_{a'} } \{\theta \cdot a' - \theta \cdot a	\}$.
\else
	\begin{align*}
    \ell(a, a') \ = \ \inf_{\theta \in \Theta : \theta \in R_{a'} } \{\theta \cdot a' - \theta \cdot a	\}\enspace .
	\end{align*}
\fi
The arc length $\ell(a, a')$ is the minimal loss of the player's valuation that would occur when the mechanism changes from allocation $a'$ to $a$, while having a type in the difference set $Q_{a'}$.

We can link the indifference complex and the allocation network through the following proposition.
It is a generalization of \cite[Proposition 5]{SaksYu05} and was given in \cite[Lemma 3]{Vidali09} without a proof, we restate it here in our notation, adding a short proof.

\begin{proposition}\label{prop:allocCycles}
 Let $(f,p)$ be a DSIC mechanism with quasi-linear utilities for one player. Let $C = (a^{(1)}, \dots, a^{(k)} = a^{(1)})$ be a cycle in the allocation network $G_f$, such that for each $j \in [k-1]$, we get $Q_{a^{(j)}} \cap Q_{a^{(j+1)}} \neq \emptyset$.
 Then the length of the cycle $C$ is $0$.
\end{proposition}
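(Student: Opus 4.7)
My plan is to exploit Proposition~\ref{prop:TruthTrop} to turn the arc-lengths in $G_f$ into differences of coordinates of a single payment vector, and then observe that summing the arc-lengths around a closed walk telescopes.

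First I would invoke Proposition~\ref{prop:TruthTrop} to obtain a payment vector $q \in \RR^{2^m}$ (renaming to avoid clashing with the paper's payment function) so that $f(\theta) \in \argmax_{a \in \{0,1\}^m} \{\theta \cdot a - q_a\}$ for all $\theta \in \Theta$. This immediately gives, for every $\theta \in R_{a'}$ and every $a \in \Omega$, the inequality $\theta \cdot a' - q_{a'} \geq \theta \cdot a - q_a$, i.e.\
\[
  \theta \cdot a' - \theta \cdot a \ \geq \ q_{a'} - q_a \enspace .
\]
Taking the infimum over $\theta \in R_{a'}$ yields $\ell(a, a') \geq q_{a'} - q_a$ for every arc $(a, a')$ of $G_f$.

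The key step is to show that if $Q_a \cap Q_{a'} \neq \emptyset$, then this bound is tight. Pick any $\theta^\star \in Q_a \cap Q_{a'} = \cl(R_a) \cap \cl(R_{a'})$. Since $\theta^\star \in \cl(R_{a'})$, there is a sequence $(\theta^{(r)})_{r \in \NN}$ in $R_{a'}$ with $\theta^{(r)} \to \theta^\star$. By continuity of the map $\theta \mapsto \theta \cdot a' - \theta \cdot a$ we get $\theta^{(r)} \cdot a' - \theta^{(r)} \cdot a \to \theta^\star \cdot (a' - a)$, hence $\ell(a, a') \leq \theta^\star \cdot (a' - a)$. On the other hand, $\theta^\star \in \cl(R_a)$ so a similar limit argument with a sequence in $R_a$ and the truthfulness inequality (with the roles of $a$ and $a'$ swapped) yields $\theta^\star \cdot a - \theta^\star \cdot a' \geq q_a - q_{a'}$, i.e.\ $\theta^\star \cdot (a' - a) \leq q_{a'} - q_a$. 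Combined with the general lower bound from above, this gives $\ell(a, a') = q_{a'} - q_a$ whenever $Q_a \cap Q_{a'} \neq \emptyset$.

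Applying this to each consecutive pair in the cycle $C = (a^{(1)}, \dots, a^{(k)} = a^{(1)})$, the total length telescopes:
\[
  \sum_{j=1}^{k-1} \ell(a^{(j)}, a^{(j+1)}) \ = \ \sum_{j=1}^{k-1} \bigl(q_{a^{(j+1)}} - q_{a^{(j)}}\bigr) \ = \ q_{a^{(k)}} - q_{a^{(1)}} \ = \ 0 \enspace .
\]
The main obstacle, as I see it, is the care needed in step two: the arc-length is an infimum over the (possibly non-closed) set $R_{a'}$, so attaining the value $q_{a'} - q_a$ really does rest on producing approximating sequences in $R_{a'}$ from a point in $\cl(R_{a'}) \cap \cl(R_a)$. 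Everything else is a straightforward consequence of the tropical characterization of truthfulness.
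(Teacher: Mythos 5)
Your proof is correct and reaches the same telescoping identity, $\ell(a^{(j)},a^{(j+1)}) = q_{a^{(j+1)}} - q_{a^{(j)}}$, that the paper uses (modulo notation). The difference is in how this per-arc identity is established. The paper invokes \cite[Proposition 5]{SaksYu05} to get that the infimum defining the arc length is attained at any common boundary point $\theta \in Q_{a^{(j)}}\cap Q_{a^{(j+1)}}$, and then combines this with the indifference equation $\theta\cdot a^{(j)} - p_{a^{(j)}} = \theta\cdot a^{(j+1)} - p_{a^{(j+1)}}$ at that point. You instead derive both bounds directly from Proposition~\ref{prop:TruthTrop}: the inequality $\ell(a,a')\geq q_{a'}-q_a$ from truthfulness over all of $R_{a'}$, and the matching reverse inequality from two limiting arguments approaching a shared boundary point from within $R_{a'}$ and $R_a$ respectively. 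This makes your argument fully self-contained (no external citation needed) and makes explicit exactly where the hypothesis $Q_a\cap Q_{a'}\neq\emptyset$ enters; the trade-off is a slightly longer write-up. The telescoping step that closes the cycle is identical in both versions.
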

\begin{proof}
 Let $C = (a^{(1)}, \dots, a^{(k)})$ be a cycle as in the statement of the proposition.
 \ifipco \else \linebreak \fi Using \cite[Proposition 5]{SaksYu05}, we obtain that for any $\theta \in Q_{a^{(j)}} \cap Q_{a^{(j+1)}}$, the equation $\theta \cdot a^{(j)} - \theta \cdot a^{(j+1)} = \ell(a^{(j)},a^{(j+1)})$ is satisfied.
 Since the mechanism $(f,p)$ is truthful and $\theta \in Q_{a^{(j)}} \cap Q_{a^{(j+1)}}$, we get $\theta \cdot a^{(j)} - p_{a^{(j)}} = \theta \cdot a^{(j+1)} - p_{a^{(j+1)}}$.
 Therefore $p_{a^{(j)}} - p_{a^{(j+1)}} = \ell(a^{(j)},a^{(j+1)})$.
 Adding up all the lengths of the arcs of $C$ we get 0, which finishes the proof.\ifipco \qed \fi
\end{proof}

A consequence of Proposition~\ref{prop:allocCycles} is that all cycles in $G_f$ with the property that all of its edges connect two common nodes of some face $\cO \subseteq \Omega$ of the indifference complex $\cI(f)$, have length $0$.
Especially, each oriented cycle in the one-skeleton of $\cI(f)$ is also a zero-cycle in $G_f$. 

For the remainder of this section, our goal is to classify truthful allocation functions for the given type of allocation mechanisms.
Recall that Proposition~\ref{prop:TruthTrop} shows that the difference sets of a truthful one-player allocation mechanism are exactly the regions of the tropical utility function of the player.
This is the key observation we use to prove our first main result.
It states that there is a bijection between implementable one-player indifference complexes for $m$ items and the regular subdivisions of the $m$-dimensional cube.

\begin{theorem} \label{thm:TriangCube}
  An indifference complex $\cI$ for $m$ items and one player is implementable if and only if there is a regular subdivision $\cS$ of the $m$-dimensional cube, such that the facets of $\cI$ are precisely the vertex sets of the maximal cells of $\cS$.
\end{theorem}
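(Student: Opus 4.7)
The plan is to combine the tropical characterization of truthful one-player mechanisms (Proposition \ref{prop:TruthTrop}) with the duality between tropical hypersurfaces and regular subdivisions of their Newton polytopes (Proposition \ref{prop:TropDual}). For the forward direction, suppose $\cI = \cI(f)$ for an implementable allocation function $f$. By Proposition \ref{prop:TruthTrop}, there exists a payment vector $p \in \RR^{2^m}$ so that $f(\theta) \in \argmax u_p(\theta)$ for the tropical polynomial $u_p(\theta) = \max\bigl\{\theta \cdot a - p_a : a \in \{0,1\}^m\bigr\}$, whose Newton polytope is the $m$-dimensional unit cube. By the definition of the difference sets $Q_a = \cl(R_a)$, each nonempty $Q_a$ is precisely the closure of the region of $V(u_p)$ in which the maximum is attained uniquely by $a$. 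Let $\cS$ be the regular subdivision of $\{0,1\}^m$ induced by the lifting $\lambda(a) \coloneqq -p_a$.

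Proposition \ref{prop:TropDual} then furnishes an inclusion-reversing bijection between $\cS$ and the polyhedral complex of regions of $V(u_p)$. Tracing through this bijection, a cell $C \in \cS$ corresponds to the face of the region-complex consisting of those type vectors $\theta$ at which the maximum in $u_p$ is attained exactly by the vertex set of $C$. Consequently, for any $\cO \subseteq \{0,1\}^m$, we have $\bigcap_{a \in \cO} Q_a \neq \emptyset$ if and only if $\cO$ is contained in the vertex set of some cell of $\cS$. Hence the faces of $\cI(f)$ are precisely the subsets of vertex sets of cells of $\cS$, and its facets coincide with the vertex sets of the maximal cells of $\cS$. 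For the converse, given any regular subdivision $\cS$ of the $m$-dimensional cube together with an inducing lifting $\lambda:\{0,1\}^m \to \RR$, set $p_a \coloneqq -\lambda(a)$ and define $f(\theta) \in \argmax u_p(\theta)$ with some fixed tie-breaking rule. Proposition \ref{prop:TruthTrop} shows that $f$ is implementable via payments $p$, and the same duality calculation identifies the facets of $\cI(f)$ with the vertex sets of the maximal cells of $\cS$.

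The main subtlety lies in correctly accounting for the allocations $a \in \{0,1\}^m$ that do not contribute vertices to $\cS$: if $(a,\lambda(a))$ lies strictly below the upper hull of $P(\{0,1\}^m, \lambda)$, then $a$ never attains the maximum in $u_p$, so $Q_a = \emptyset$ and $a$ is absent from every face of $\cI(f)$, which is consistent with $a$ not being a vertex of $\cS$. Verifying that this bookkeeping aligns in both directions—together with the observation that every boundary point of an open region of $V(u_p)$ lies in the closure of each neighboring region whose label achieves the same tropical maximum there—is what ensures that the biconditional characterizing nonempty intersections $\bigcap_{a \in \cO} Q_a$ is clean. All of this is essentially built into Proposition \ref{prop:TropDual} applied to our specific $u_p$, so the argument reduces to unpacking that duality in the language of difference sets.
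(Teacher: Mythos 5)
Your proposal follows the same approach as the paper's proof: combine Proposition~\ref{prop:TruthTrop} with the tropical duality of Proposition~\ref{prop:TropDual}, identifying the difference sets with regions of $V(u_p)$ and the facets of~$\cI(f)$ with maximal cells of the induced regular subdivision of $[0,1]^m$. The extra remarks about allocations whose lifted points fall strictly below the upper hull (hence $Q_a=\emptyset$) and about sign conventions for $\lambda=-p$ are correct and merely make explicit bookkeeping the paper leaves implicit.
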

\begin{proof}
 Let $\cI$ be an indifference complex.
 It is implementable if and only if there exists a truthful allocation function $f$ with $\cI(f) = \cI$.
 By Proposition~\ref{prop:TruthTrop}, this is equivalent to the fact that there is a payment vector $p \in \RR^{2^m}$, such that the difference sets $Q_a$ are exactly the regions of the tropical hypersurface $V(u_p)$.
 As the Newton polytope of $u_p$ is the unit cube $[0,1]^m$, Proposition~\ref{prop:TropDual} provides a duality between the difference sets $Q_a$ and the regular subdivision of $[0,1]^m$ with respect to the payment vector $p$.
 Hence, a maximal cell in the regular subdivision with vertices $(a^{(1)}, \dots, a^{(k)})$ corresponds to a maximal set of allocations such that $\bigcap_{a \in \{a^{(1)} ,\dots, a^{(k)}\}} Q_a \neq \emptyset$.
 The latter is a facet of $\cI(f)$.\ifipco \qed \fi
\end{proof}

Note that the proof is constructive.
Further, Theorem~\ref{thm:TriangCube} says that the simplicial complex $\cI(f)$ is precisely the crosscut complex of the poset of cells of the regular subdivision $\cS$ \cite[\S10]{Bjorner:1995}.
If $\cS$ is a triangulation then its crosscut complex is $\cS$ itself, seen as an abstract simplicial complex.
The main consequence of Theorem~\ref{thm:TriangCube} is that for truthful one-player mechanisms with additive and quasi-linear utilities, the partitioning of the type space into difference sets is characterized by the duality to the regular subdivision of the cube, which is captured by the indifference complex. 

\begin{example} \label{ex:counter}
 Consider the case $m=3$ and $n=1$ where the player has a type $\theta \in \RR^3$.
 Let $f$ be the local allocation function defined as
 \ifipco
 $f(\theta) \in \argmax\smallSetOf{\theta \cdot a - p_a}{a \in \{0,1\}^3}$,
 \else
 \[
  f(\theta) \ \in \ \argmax\SetOf{\theta \cdot a - p_a}{a \in \{0,1\}^3}, 
 \]
 \fi
 with $p_{000} = 0$, $p_{100} = p_{010} = p_{001} = 1/4$, $p_{110} = p_{101} = p_{011} = 2/3$, and $p_{111} = 5/6$.
 Fig.~\ref{fig:counter} shows the type space for $\theta \in [0,1]^3$.
 The five maximal cells of $\cI(f)$ are
 \ifipco
 $\{0,1,2,4\}$, $\{2,3,4,7\}$, $\{1,4,5,7\}$, $\{1,2,3,7\}$, $\{3,5,6,7\}$; here we use the binary encoding of $4a_1+2a_2+a_3$ for the vertex $a\in \{0,1\}^3$.
 \else
 \begin{multline*}
 \{000,100,010,001\}, \{100,010,110,111\}, \{100,001,101,111\}, \\ 
 \{010,001,011,111\} \text{ and } \{110,101,011,111\} \enspace .
 \end{multline*}
 \fi
 These cells form a regular triangulation of $[0,1]^3$, which is type F in Fig.~\ref{fig:AllocsAndTriangs}.
\end{example}

\begin{figure}[tb]
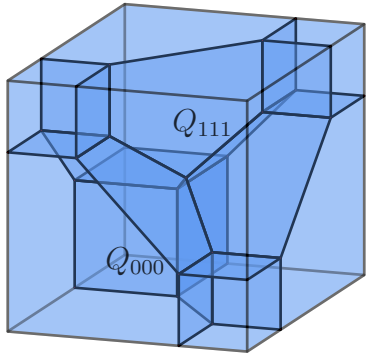

\begin{center}
  \include{counterexample}
\end{center}
\caption{Subdivision of the type space $[0,1]^3$ induced by the allocation function described in Example~\ref{ex:counter}.
  The region $Q_{000}$ corresponds to the corner in the lower back of the cube and $Q_{111}$ corresponds to the upper front corner.
  This and other pictures were obtained via \polymake~\cite{DMV:polymake}.}
\label{fig:counter}
\end{figure}

We define two allocation functions $f,g:\Theta\to\Omega$ as \emph{combinatorially equivalent} if their indifference complexes agree; i.e., $\cI(f)=\cI(g)$.
As before we are primarily concerned with the case where $\Omega=\{0,1\}^m$ and the allocation functions are truthful allocations of $m$ items.
In this way we can relate the allocation space with regular subdivisions of the cube $[0,1]^m$.
\begin{definition}
  A truthful allocation function on $m$ items is \emph{nondegenerate} if the associated regular subdivision of the $m$-cube is a triangulation.
\end{definition}

\ifipco
\begin{table}[tb]
\else
\begin{table}[b]
\fi
  \caption{Triangulations of $m$-cubes. Orbit sizes refer to regular triangulations}
  \label{tab:triangulations}
  \renewcommand{\arraystretch}{0.9}
  \begin{tabular*}{\linewidth}{@{\extracolsep{\fill}}rrrrr@{}}\toprule
    $m$ & all & regular & $\Sym(m)$-orbits & $\Gamma_m$-orbits \\
    \midrule
    2 &  2 &  2 &  2 & 1 \\
    3 & 74 & 74 & 23 & 6 \\
    4 & 92{,}487{,}256 & 87{,}959{,}448 & 3{,}706{,}261 & 235{,}277 \\
    \bottomrule
  \end{tabular*}
\end{table}

Triangulations of $m$-cubes are described in \cite[\S6.3]{Triangulations}.
The first two columns of Table~\ref{tab:triangulations} summarize the known values of the number of all (regular) triangulations of the $m$-cube.
In particular the second column shows the number of combinatorial types of nondegenerate truthful allocations.
The number of all, not necessarily regular, triangulations of the $4$-cube was found by Pournin \cite{Pournin:2013}.
The corresponding numbers of triangulations for $m\geq 5$ are unknown.

\begin{remark}\label{rem:regularRefinement}
 Any regular subdivision may be refined to a regular triangulation, on the same set of vertices; see \cite[Lemma 2.3.15]{Triangulations}.
\end{remark}

Our next goal is to explain the third and fourth columns of Table~\ref{tab:triangulations}.
To this end we need to discuss the symmetries of the cube, which are known.
That will be the key to understanding (truthful) allocations of exchangeable items.
The automorphism group, $\Gamma_m$, of the $m$-cube $[0,1]^m$ comprises those bijections on the vertex set which map faces to faces.
The group $\Gamma_m$ is known to be a semidirect product of the symmetric group $\Sym(n)$ with $\ZZ_2^m$; its order is $m!\cdot 2^m$.
Here the $j$-th component of $\ZZ_2^{m}$ flips the $j$-th coordinate, and this is a reflection at the affine hyperplane $x_j=\tfrac{1}{2}$; that map does not have any fixed points among the vertices of $[0,1]^m$.
The subgroup $\ZZ_2^m$ of all coordinate flips acts transitively on the $2^m$ vertices.
The symmetric group $\Sym(m)$ naturally acts on the coordinate directions; this is precisely the stabilizer of the origin in $\Gamma_m$; it acts transitively on the set of vertices with, say, $k$ ones and $m-k$ zeros.
Since the cells in each triangulation of $[0,1]^m$ are convex hulls of a subset of the vertices, the group $\Gamma_m$ also acts on the set of all triangulations of $[0,1]^m$.
Moreover, since $\Gamma_m$ acts via affine maps, it sends regular triangulations to regular triangulations.

The stabilizer $\Sym(m)$ acts transitively on the $\tbinom{m}{k}$ vertices of $[0,1]^m$ with exactly $k$ ones.
In this way, a $\Sym(m)$-orbit of regular triangulations corresponds to a set of nondegenerate truthful allocations functions for which the indifference complexes agree, up to permuting the items.
We call such allocation functions \emph{combinatorially equivalent for exchangeable items}.
The $\Sym(m)$-orbits of regular triangulations have been computed with \mptopcom \cite{JordanJoswigKastner:2018}; see the third column of Table~\ref{tab:triangulations}.
That computation furnishes a proof of the following result.

\begin{theorem} \label{thm:Cubes}
 There are 23 combinatorial types of nondegenerate truthful allocation functions for $n=3$ exchangeable items.
 Further, the corresponding count for $n=4$ yields 3{,}706{,}261.
\end{theorem}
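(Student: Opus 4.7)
The plan is to combine Theorem~\ref{thm:TriangCube}, the definition of nondegeneracy, and the action of the symmetry group $\Sym(m)$ in order to reduce the classification question to a computer enumeration of orbits of regular triangulations of the $m$-cube.

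First, I would invoke Theorem~\ref{thm:TriangCube}: implementable indifference complexes of one-player $m$-item allocation functions are in bijection with regular subdivisions of $[0,1]^m$. Restricting to nondegenerate functions, this bijection specialises to regular triangulations of $[0,1]^m$ on the full vertex set $\{0,1\}^m$. Two nondegenerate truthful allocation functions are then combinatorially equivalent, in the sense of having identical indifference complexes, precisely when their associated triangulations coincide as labeled simplicial complexes on $\{0,1\}^m$.

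Second, I would identify the relabeling of items with the action of $\Sym(m)$ on the coordinates of $\RR^m$. This action is by affine symmetries of $[0,1]^m$ preserving its vertex set, so it carries regular triangulations to regular triangulations. Consequently, combinatorial equivalence for exchangeable items is equivalent to the associated triangulations lying in a common $\Sym(m)$-orbit. The problem thus reduces to counting $\Sym(m)$-orbits of regular triangulations of the $m$-cube.

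Third, for the specific numerical values I would appeal to the enumeration recorded in the third column of Table~\ref{tab:triangulations}, which is carried out with \mptopcom \cite{JordanJoswigKastner:2018}. For $m=3$ the 74 regular triangulations of the $3$-cube fall into 23 $\Sym(3)$-orbits; for $m=4$ the 87{,}959{,}448 regular triangulations of the $4$-cube fall into 3{,}706{,}261 $\Sym(4)$-orbits. The main obstacle is computational rather than conceptual: the $m=4$ case demands an efficient traversal of the secondary fan together with a fast canonical-form test for orbit representatives, which is precisely what the flip-graph enumeration in \mptopcom provides.
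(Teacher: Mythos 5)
Your proposal is correct and follows essentially the same route as the paper: both reduce via Theorem~\ref{thm:TriangCube} and the nondegeneracy definition to counting $\Sym(m)$-orbits of regular triangulations of $[0,1]^m$, and both then cite the \mptopcom\ enumeration in the third column of Table~\ref{tab:triangulations}. The one minor improvement in your write-up is that you correctly read the theorem's ``$n=3$'' and ``$n=4$'' as the number of items $m$, consistent with the paper's conventions elsewhere.
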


\begin{figure}[tb]
  \input{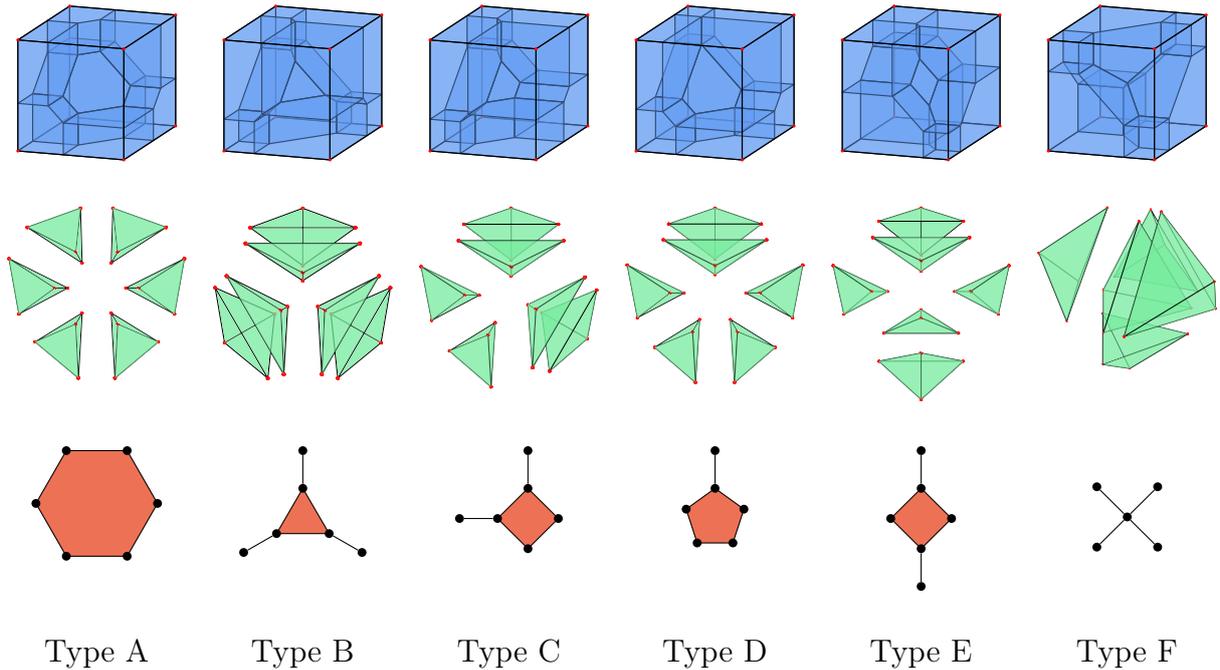}
  \caption{The types of truthful allocations corresponding to the six $\Gamma_3$-orbits of the $3$-cube, together with the corresponding triangulations (exploded) and their tight spans; cf.\ \cite[Fig.~6.35]{Triangulations}.
    The \emph{tight span} of a regular triangulation $\cS$ is the subcomplex of bounded cells of the tropical hypersurface dual to $\cS$ (seen as an ordinary polyhedral complex); see \cite[\S10.7]{ETC}.}
  \label{fig:AllocsAndTriangs}
 \end{figure}

It makes sense to focus on the combinatorics of triangulations, without paying attention to their interpretations for auctions.
This amounts to studying the orbits of the full group $\Gamma_m$ acting on the set of (regular) triangulations; see the fourth column of Table~\ref{tab:triangulations}.
The six $\Gamma_3$-orbits of triangulations of the $3$-cube are depicted in Fig.~\ref{fig:AllocsAndTriangs}.
This number expands to 23 if we consider the possible choices of locating the origin.
We illustrate the idea for the subdivision of the type space in Fig.~\ref{fig:counter}.
Its $\Gamma_3$-orbit splits into two $\Sym(3)$-orbits: one from putting the origin in one of the four cubes or in one of the four noncubical cells.

\begin{remark}\label{rem:vidali}
  Vidali considered a more restrictive notion of nondegeneracy of allocation functions \cite[Definition 8]{Vidali09}, and in \cite[Theorem 1]{Vidali09} she arrived at a classification of five types for three items.
  We point out that that number should be corrected to six, which is the count of $\Gamma_3$-orbits of regular triangulations of $[0,1]^3$ reported in Table~\ref{fig:AllocsAndTriangs}.
  The missing type is F (as in Fig.~\ref{fig:AllocsAndTriangs}), arising from Example~\ref{ex:counter}.
  The details are explained in the full version of this paper.
\end{remark}

\ifshort
\else
\begin{figure}[tb]
 \begin{multicols}{3}
  \center
  \begin{tikzpicture}[scale = .5]
   \tikzstyle{facet} = [color=cubecolor, opacity=0.8];
   \tikzstyle{edge} = [thick];
   \coordinate (0) at (0,0);
   \coordinate (1) at (0,4);
   \coordinate (2) at (0,6);
   \coordinate (3) at (2,4);
   \coordinate (4) at (2,6);
   \coordinate (5) at (4,0);
   \coordinate (6) at (4,2);
   \coordinate (7) at (6,0);
   \coordinate (8) at (6,2);
   \coordinate (9) at (6,6);
   \fill[facet] (0)--(1)--(3)--(6)--(5)--(0)--cycle;
   \fill[facet] (1)--(2)--(4)--(3)--(1)--cycle;
   \fill[facet] (3)--(4)--(9)--(8)--(6)--(3)--cycle;
   \fill[facet] (5)--(6)--(8)--(7)--(5)--cycle;
   \draw[edge] (1)--(3)--(6)--(5);
   \draw[edge] (4)--(3)--(6)--(8);
   
	 \node (r11) at (1,5) {\footnotesize $Q_{\{2\}}$};
	 \node (r11) at (4.5,4.5) {\footnotesize $Q_{\{1,2\}}$};
	 \node (r11) at (5,1) {\footnotesize $Q_{\{1\}}$};
	 \node (r11) at (1.5,1.5) {\footnotesize $Q_{\emptyset}$};
   
   \draw[ultra thick, dashed, cubecolor!50!red] (8) -- (2,2) -- (4);
  \end{tikzpicture}
  
  (a)
  \center
  \begin{tikzpicture}[scale = .5]
   \tikzstyle{facet} = [fill=triangcolor];
   \tikzstyle{node} = [circle, scale=0.5pt, fill=red]
   \draw[facet] (0,0)--(0,6)--(6,6)--(0,0)--cycle;
   \draw[facet] (0,0)--(6,0)--(6,6)--(0,0)--cycle;
   \node[node] at (0,0) {};
   \node[node] at (6,0) {};
   \node[node] at (0,6) {};
   \node[node] at (6,6) {};
  \end{tikzpicture}
  
  (b)
  \center
  \begin{tikzpicture}[scale = .5]
   \tikzstyle{u} = [circle, scale=0.3pt, fill=black];
   \node[u] (1) at (1.8,4.2) {};
   \node[u] (2) at (4.2,1.8) {};
   \node (dummy1) at (0,0) {};
   \node (dummy2) at (6,6) {};
   \draw (1) -- (2);
  \end{tikzpicture}
  
  (c)
 \end{multicols}
 \caption{The subdivision of the type space into difference sets by a local auction mechanism (a), together with the corresponding triangulation of the square (b) and its tight span (c).
   The dashed red lines in (a) mark the bounding box $B_{\{1,2\}}$ of the region $Q_{\{1,2\}}$, as defined by Vidali~\cite{Vidali09}.}
 \label{fig:boundingBox}
\end{figure}
\fi

\section{Sensitivity of Mechanisms}

In this section, we study by how much the allocations for a fixed player change under a slight modification of the reported type.
These changes are measured in the following two ways.
For two local allocations $a, b \in \{0,1\}^m$, let the \emph{cardinality distance} be
\ifipco
$\Cd(a,b) = \big\vert \vert a \vert_1 - \vert b \vert_1 \big\vert$,
\else
\[
  \Cd(a,b) \ = \ \Big\vert \vert a \vert_1 - \vert b \vert_1 \Big\vert \enspace ,
\]
\fi
and let the \emph{Hamming distance} be
\ifipco
$\Hd(a,b) = \vert a - b \vert_1$,
\else
\[
  \Hd(a,b) \ = \ \vert a - b \vert_1 \enspace ,
\]
\fi
where $\vert\cdot\vert_1$ is the $1$-norm.
Note that the cardinality distance is a pseudometric.
Let further $\Phi_m$ be the set of implementable indifference complexes on $m$ items.
Then, we define the \emph{cardinality sensitivity} as 
\ifipco
$\mu_c(m) = \min_{\cI \in \Phi_m} \bigl\{ \max \bigl\{ \Cd(a,b)\,\big\vert\, a,b \in F \text{ for some } F \in \cI \bigr\} \bigr\}$.
\else
\[
  \mu_c(m) \ = \ \min_{\cI \in \Phi_m} \biggl\{ \max \bigSetOf{ \Cd(a,b)}{ a,b \in F \text{ for some } F \in \cI} \biggr\} \enspace .
\]
\fi
The \emph{Hamming sensitivity} $\mu_h(m)$ arises in the same way, with $\Hd$ instead of $\Cd$.
Intuitively, the cardinality sensitivity $\mu_c(m)$ is the minimal amount such that there is a one-player DSIC mechanism for $m$ items with the property that any slight change in the type of the player does not cause her allocated bundle to change its cardinality by more than $\mu_c(m)$.
  
Our strategy to compute these values is as follows.
From Theorem~\ref{thm:TriangCube} we know that there is a bijection between $\Phi_m$ and the set of regular subdivisions of $[0,1]^m$.
So we need to identify those subdivisions, for which the maximal distance between any two vertices of one of its cells is minimized.
In this way, we can compute $\mu_c(m)$ exactly, and we give bounds for $\mu_h(m)$.

\begin{proposition}\label{prop:CardStab}
 The cardinality sensitivity of DSIC one-player auctions \ifipco equals \else is \fi $\mu_c(m) = 1$.
\end{proposition}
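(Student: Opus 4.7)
The plan is to establish $\mu_c(m) \geq 1$ and $\mu_c(m) \leq 1$ separately. For the lower bound, let $\cI \in \Phi_m$ be any implementable indifference complex. By Theorem~\ref{thm:TriangCube} there is a regular subdivision $\cS$ of $[0,1]^m$ whose maximal cells have vertex sets equal to the facets of $\cI$. Pick any maximal cell $C \in \cS$. Since $C$ is full-dimensional in $\RR^m$, its vertex set cannot be contained in any single hyperplane of the form $\{x \in \RR^m : \sum_j x_j = k\}$. Hence $C$ has two vertices $a, b \in \{0,1\}^m$ with $|a|_1 \neq |b|_1$, and the corresponding facet $F$ of $\cI$ satisfies $\Cd(a,b) \geq 1$. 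This yields $\mu_c(m) \geq 1$.

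For the upper bound, I would construct a regular subdivision of $[0,1]^m$ whose cells only span two consecutive cardinality levels. The candidate is the \emph{slab subdivision} whose maximal cells are the hypersimplex slabs
\[
  S_k \ = \ \bigSetOf{x \in [0,1]^m}{k \leq \textstyle\sum_{j=1}^m x_j \leq k+1} \enspace , \qquad k = 0, 1, \dots, m-1 \enspace .
\]
Each slab $S_k$ is the convex hull of the cube vertices $a \in \{0,1\}^m$ with $|a|_1 \in \{k, k+1\}$, and for any two such vertices $a,b$ we have $\Cd(a,b) \leq 1$. The slabs tile $[0,1]^m$ and intersect precisely in the hypersimplex facets $\{\sum x_j = k\} \cap [0,1]^m$, so they form a polytopal subdivision.

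The main technical point is regularity. I would exhibit this via the lifting function $\lambda : \{0,1\}^m \to \RR$ depending only on cardinality, say $\lambda(a) = -(|a|_1 - m/2)^2$, which is strictly concave as a function of $|a|_1$. The upper faces of the lifted polytope $P([0,1]^m \cap \ZZ^m, \lambda)$ have outer normals in the half-space where the last coordinate is positive, and for any $k$ the points in layer $k$ and layer $k+1$ are jointly visible from above precisely when strict concavity of $\lambda$ (in its one-variable form) is guaranteed; a short computation confirms that the projected upper faces are exactly the slabs $S_k$. By Definition~\ref{def:RegSubd} (or its inline version), the slab subdivision is regular. Combining the two bounds gives $\mu_c(m) = 1$.

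The hardest step to make watertight will be the verification of regularity, i.e., checking that the strictly concave lifting $\lambda$ produces the slabs $S_k$ as upper faces and nothing finer. This reduces to the elementary observation that for $x \in S_k$, the affine function of $x_1, \dots, x_m$ interpolating $\lambda$ on layers $k$ and $k+1$ strictly dominates the values $\lambda(a)$ for $a$ in any other layer, by strict concavity of $\lambda$ along the direction $\sum_j x_j$; all other cell candidates are ruled out since $\lambda$ is constant on each layer.
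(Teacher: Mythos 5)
Your proof is correct and follows essentially the same route as the paper: both decompositions are the slab subdivision of $[0,1]^m$ by coordinate sum (your $S_k$ and the paper's $P_k$ are the same slabs, just indexed with an off-by-one shift), and both establish regularity via a concave quadratic height depending only on $|a|_1$. Your lifting $\lambda(a) = -(|a|_1 - m/2)^2$ differs from the paper's $-(\sum_i a_i)^2$ only by the affine term $m|a|_1 - m^2/4$, which does not change the induced regular subdivision. One thing you do that the paper omits is to spell out the lower bound $\mu_c(m) \geq 1$, noting that the maximal cells are full-dimensional and hence must hit at least two cardinality layers; the paper takes this for granted. Your regularity argument ("strict concavity along $\sum_j x_j$") is the right idea but is stated informally; the paper's full version instead exhibits an explicit supporting hyperplane $H_k$ for each slab and verifies tightness by a direct computation, which is more airtight. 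If you were to write this up formally you would want to carry out that interpolation computation or quote the standard fact that a strictly concave function of an affine functional induces the corresponding slab subdivision.
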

\begin{proof}
 We first slice the unit cube into the polytopes 
 \ifipco
 $P_k = \smallSetOf{x \in [0,1]^m}{k-1 \leq \sum_{i = 1}^m x_i \leq k}$ for $k = 1, \dots, m$. 
 \else
 \[
   P_k \ = \ \SetOf{x \in [0,1]^m}{k-1 \leq \sum_{i = 1}^m x_i \leq k} \; , \quad k = 1, \dots, m \enspace .
 \]
 \fi
 The polytopes $P_1, \dots, P_m$ form the maximal cells of a polytopal subdivision, $\cS$ of $[0,1]^m$.
 That subdivision is regular with height function $\lambda(x) = -\left(\sum_{i = 1}^m x_i\right)^2$.
 This proves the claim, as for each $P_k$, the difference in the coordinate sums of two of its vertices differ by at most one.
 \ifipco
 \else

 To show that $\cS$ is a regular subdivision, let $k$ be fixed and let $v$ be a vertex of the $m$-cube with $k+\delta$ many ones, where $\delta \in \{-k, \dots, m-k\}$.
 We observe that
 \begin{align*}
   \sum_{i \in [m]} v_i + \frac{\lambda(v)}{2k-1} \;
   &= \; \frac{1}{2k-1}\left( (k+\delta)(2k-1) - (k+\delta)^2 \right) \\
   &= \; \frac{1}{2k-1}(k+\delta)(k-\delta-1) \; = \; \frac{k^2-k-\delta(\delta+1)}{2k-1} \;
     \leq \; \frac{k^2-k}{2k-1} \enspace ,
 \end{align*}
 where the last inequality is tight only if $\delta \in \{-1,0\}$.
 Therefore, the hyperplane
 \[
   H_k \ = \ \BiggSetOf{(x,\lambda) \in \RR^m \times \RR}{ \sum_{i \in [m]} x_i + \frac{\lambda}{2k-1} = \frac{k^2 - k}{2k-1}}
 \]
 is a supporting hyperplane of the lifted unit cube $\conv \SetOf{(x,\lambda(x))}{x \in \{0,1\}^m}$, and the projection of the intersection of $H_k$ with the lifted unit cube is $P_k$.
 \fi
 \ifipco \qed \fi
\end{proof}

Note that a mechanism corresponding to the indifference complex we used to prove the cardinality sensitivity of one can be obtained by choosing the prices $p(a) = \left(\sum_{i=1}^m a_i\right)^2$ for the allocations $a \in \{0,1\}^m$.

\begin{proposition}
\label{pro:hamming}
 The Hamming sensitivity for DSIC one-player auctions on $m \geq 3$ items is bounded by
 $2 \leq \mu_h(m) \leq m-1$.
\end{proposition}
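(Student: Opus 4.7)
The plan is to handle the two bounds separately.

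For the lower bound $\mu_h(m) \geq 2$, every maximal cell of a regular subdivision of $[0,1]^m$ is $m$-dimensional and therefore contains at least $m+1 \geq 4$ cube vertices when $m \geq 3$. Three distinct vertices in $\{0,1\}^m$ cannot be pairwise at Hamming distance $1$, because the cube graph is bipartite (there are no triangles). Hence every maximal cell contains some pair of vertices at Hamming distance at least $2$, and by Theorem~\ref{thm:TriangCube} the same statement transfers to implementable indifference complexes.

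For the upper bound $\mu_h(m) \leq m-1$, I would first exhibit a regular subdivision of $[0,1]^3$ in which every cell has Hamming diameter $2$. The lifting
\[
  \lambda_3(v) \ = \ \begin{cases} 0 & \text{if $\vert v \vert_1$ is even,} \\ -1 & \text{if $\vert v \vert_1$ is odd} \end{cases}
\]
induces a subdivision with five maximal cells: the central tetrahedron $\conv\{000, 011, 101, 110\}$ on the four even-weight vertices, together with four corner tetrahedra, each formed by an odd-weight vertex $v$ and its three (even-weight) cube neighbors. One verifies via explicit supporting hyperplanes that these are precisely the upper faces of the lifted polytope, and that the volumes sum to $\tfrac{1}{3} + 4\cdot\tfrac{1}{6} = 1$, so they tile $[0,1]^3$. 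Every pair of vertices within any of the five cells differs in at most two coordinates. For general $m \geq 3$, extend the lifting trivially by $\lambda_m(v_1,\dots,v_m) = \lambda_3(v_1,v_2,v_3)$; the induced regular subdivision has cells of the form $C \times [0,1]^{m-3}$ where $C$ ranges over the five base cells. The Hamming distance between any two vertices of such a cell is bounded by the Hamming diameter of $C$ plus $m-3$, and hence by $2 + (m-3) = m-1$.

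The main obstacle is the construction for $m=3$. The layered slicing used for $\mu_c$ in Proposition~\ref{prop:CardStab} does not help here, because the middle layer $P_2$ is an octahedron whose three diagonals are precisely the three non-trivial cube-antipodal pairs; every triangulation of this octahedron must use one such diagonal as an edge, thereby creating a cell of Hamming diameter $3$. The parity lifting circumvents this by replacing the layered decomposition with a central-tetrahedron-plus-corners decomposition in which each odd-weight vertex is paired only with its even-weight neighbors, and vice versa.
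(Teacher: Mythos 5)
Your proof is correct, and both bounds are sound. The lower bound is the same argument as the paper's (stated via bipartiteness of the cube graph rather than via an explicit triple, but that is the same observation). For the upper bound, your base construction for $m=3$ is exactly the paper's: the parity lifting $\lambda(v)=0$ or $-1$ according to $|v|_1 \bmod 2$ yields the central tetrahedron plus four corner simplices (type~F). Where you diverge is the extension to general $m$. The paper applies the $\pm$-parity lifting in \emph{all} $m$ coordinates when $m$ is odd, then observes that no cell can contain an antipodal pair (because antipodal vertices have opposite weight parity when $m$ is odd), and handles even $m$ by a prism over $\cS_{m-1}$. You instead lift only the first three coordinates, so the resulting subdivision is $\cS_3 \times [0,1]^{m-3}$ uniformly for all $m\geq 3$, and bound the Hamming diameter directly as $2+(m-3)=m-1$. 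Your route avoids the parity case split entirely and does not need the antipodality observation; it is the shorter and more uniform argument. The paper's full-dimensional parity construction gives the same bound but also has the additional structural property that no cell contains an antipodal pair, which is slightly more information than a Hamming bound, though not used elsewhere. Your closing remark about why the cardinality slicing fails (the middle octahedral layer for $m=3$ forces an antipodal diagonal) is accurate and explains the obstacle well.
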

\begin{proof}
 For the lower bound let us consider a triangle with the vertices $a,b,c \in \{0,1\}^m$.
 If we assume $\Hd(a,b) = \Hd(a,c) = 1$ then the vertices $a$ and $b$ (resp.\ $a$ and $c$) differ by a coordinate flip.
 Therefore, the vertices $b$ and $c$ differ by either two coordinate flips or none.
 As $b \neq c$, the former is the case and $\Hd(b,c) = 2$.
 As the maximal cells of a subdivision of $[0,1]^m$ for $m\geq 2$ contain at least three vertices, this proves the lower bound.
 
 For the upper bound, we show that there is a subdivision, $\cS$, of $[0,1]^m$ such that no cell of $\cS$ contains two antipodal vertices, i.e., two vertices such that their sum equals the all ones vector.
 We first consider the case where $m$ is odd.
 For a vertex $x\in\{0,1\}^m$, let $\Delta(x)$ be the \emph{cornered simplex} with apex $x$.
 That is, its vertices comprise $x$ and all its neighbors in the vertex-edge graph of the unit cube; cf.\ \cite[Fig.~6.3.1]{Triangulations}.
 Let $\cS_m$ be the subdivision of $[0,1]^m$ with the following maximal cells: the \emph{big cell} is the convex hull of all vertices with an even number of ones, and the \emph{small cells} are the cornered simplices $\Delta(x)$, where $x\in\{0,1\}^m$ with $\sum x_i$ odd.
 The subdivision $\cS_3$ is the triangulation of type F in Fig.~\ref{fig:AllocsAndTriangs}; for $m\geq 5$ the big cell is not a simplex, and so $\cS_m$ is not a triangulation in general.

 At any rate, the subdivision $\cS_m$ is always regular: it is induced by the height function which sends a vertex $x$ to $0$, if it has an even number of ones and to $-1$, if that number is odd.
 For $m\geq 3$ odd, no antipodal pair of vertices is adjacent in the vertex-edge graph of any cell of $\cS_m$.
 
 If the dimension $m$ is even, we consider the $m$-dimensional unit cube as a prism over $[0,1]^{m-1}$.
 Then $m-1$ is odd, and we can employ the subdivision $\cS_{m-1}$ of $[0,1]^{m-1}$ that we discussed before.
 We obtain a subdivision, $\cS_m$, of $[0,1]^m$ whose maximal cells are prisms over the maximal cells of $\cS_{m-1}$.
 The subdivision $\cS_m$ is again regular: this can be seen from assigning the vertices $x \times \{0\}$ and $x \times \{1\}$ the same height as the vertex $x$ in $\cS_{m-1}$.
 Now let $P$ be a maximal cell in $\cS_{m-1}$, such that $Q = P \times [0,1]$ is a maximal cell of $\cS_m$.
 If $Q$ contained an antipodal pair of vertices, then by removing the last coordinate, we would get an antipodal pair in $P$, which is absurd.
 The subdivision $\cS_m$ can be refined to a regular triangulation; see Remark~\ref{rem:regularRefinement}.
 This completes the proof.\ifipco \qed \fi
\end{proof}

\ifshort
\else
\section{Characterization of $n$-Player Mechanisms}

Until now, we studied the local properties of difference sets for allocation mechanisms by always fixing some player $i$ and the type vector $\theta_{-i}$ of the other players.\todo[author=mic]{compare with product-mix auctions \cite[\S4.9]{ETC}}
Now we want to turn our attention again onto the global picture and study the entire indifference complex $\cI(f)$ for some truthful allocation function $f$.
However, this task is very hard for general truthful allocations, as they can be arbitrary to some degree; see Fig.~\ref{fig:ArbAlloc}.
Therefore we will consider the case, where $f$ is an affine maximizer.

\begin{figure} 
 \center
 \begin{tikzpicture}
  \draw (0,0)--(3,0)--(3,3)--(0,3)--cycle;
  \node[below left] at (0,0) {0};
  \node[left] at (0,3) {1};
  \node[below] at (3,0) {1};
  \node at(0.8,2) {$Q_{01}$};
  \node at(2.2,1) {$Q_{10}$};
  \draw plot [smooth, tension=0.5] coordinates{(0,0) (1, 0.3) (1.5, 1.5) (2,1.8) (2.5, 2.8) (3,3)};
 \end{tikzpicture}
 \caption{Type space of a truthful allocation function for one item and two players. The type space of each player is the interval $[0,1]$. In the lower right region, the item gets allocated to the player on the $x$-axis, whereas it gets allocated to the player on the $y$-axis in the upper left region.}
 \label{fig:ArbAlloc}
\end{figure}

\begin{definition}
 The allocation function $f$ is an \emph{affine maximizer}, if there are some player weights $w_1, \dots, w_n \in \RR$ and some allocation weights $c_A \in \RR$ for all $A \in \Omega$, such that
 \[
 	f(\theta) \in \argmax \biggSetOf{c_A + \sum_{i \in [n]}w_i \theta_i \cdot A_i}{A \in \Omega} \enspace ,
 \]
 for all $\theta \in \Theta$.
\end{definition}

It is a well-known fact that affine maximizer are truthful.
If an indifference complex is implementable via an affine maximizer, we call it \emph{affinely implementable}.
We denote the value which gets maximized by the allocation chosen by $f$ as
\begin{equation}\label{eq:AffMax}
 u(\theta) = \max \biggSetOf{c_A + \sum_{i \in [n]}w_i \theta_i \cdot A_i}{A \in \Omega} \enspace .
\end{equation}
Note that $u(\theta)$ is a tropical polynomial, whose Newton polytope is a linear transformation of $(\Delta_{n-1})^m$, namely the one which scales the $m$ coordinates corresponding to the allocation of player $i$ by $w_i$.
To be precise, if we interpret the vertices of $(\Delta_{n-1})^m$ as allocations $A \in \{0,1\}^{m \times n}$, the transformation sends $A$ to $D A$, where $D$ is the diagonal matrix with values $w_1,\dots,w_n$ on the diagonal.
However, since the transformation is linear, combinatorially both polytopes have the same subdivisions.
Similarly to the previous case, we can now declare a bijection between affinely implementable indifference complexes and subdivisions of $(\Delta_{n-1})^m$.

\begin{theorem}
 An indifference complex $\cI$ for $n$ players and $m$ items is affinely implementable if and only if there is a regular subdivision $\cS$ of $(\Delta_{n-1})^m$, such that $F$ is a facet of $\cI$ if and only if $F$ is the vertex set of a maximal cell in $\cS$.
\end{theorem}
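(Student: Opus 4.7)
The plan is to mirror the argument used for Theorem~\ref{thm:TriangCube}, replacing the unit cube $[0,1]^m$ by the product of simplices $(\Delta_{n-1})^m$. The starting point is to read the value function $u(\theta)$ from \eqref{eq:AffMax} as a max-tropical polynomial in the $nm$ variables $\theta_{i,j}$: the support is indexed by the allocations $A \in \Omega$, with exponent vector whose $(i,j)$-entry is $w_i A_{i,j}$ and constant term $c_A$. Since each allocation matrix has exactly one $1$ per column, the supports sit at the vertices of the polytope $D \cdot (\Delta_{n-1})^m$, where $D = \diag(w_1,\dots,w_n)$ acts block-wise on each of the $m$ simplex factors. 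Scaling by an invertible diagonal matrix is an affine isomorphism, so regular subdivisions and their combinatorial structure transfer unchanged between $(\Delta_{n-1})^m$ and its image under $D$.

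Next, I would apply Proposition~\ref{prop:TropDual} to $u$. This produces an inclusion-reversing bijection between the regions of the tropical hypersurface $V(u)$ and the regular subdivision of the Newton polytope of $u$ induced by the lifting $\lambda(A) = c_A$. Because $f$ is the argmax in \eqref{eq:AffMax}, the difference sets $Q_A$ are precisely the closures of the regions of $V(u)$. Hence a collection $\cO \subseteq \Omega$ satisfies $\bigcap_{A \in \cO} Q_A \neq \emptyset$ if and only if the associated vertex set spans a cell of the regular subdivision of $(\Delta_{n-1})^m$. Taking inclusion-maximal such collections yields the required correspondence between the facets of $\cI(f)$ and the vertex sets of the maximal cells of $\cS$.

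For the converse, I would take any regular subdivision $\cS$ of $(\Delta_{n-1})^m$ with lifting function $\lambda \colon \Omega \to \RR$ and construct an affine maximizer with all player weights $w_i = 1$ and allocation weights $c_A = \lambda(A)$. Then $u(\theta)$ has Newton polytope $(\Delta_{n-1})^m$ and the regular subdivision it induces via $\lambda$ is exactly $\cS$, so the previous paragraph produces an allocation function whose indifference complex has the prescribed facet structure.

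The only real obstacle is the bookkeeping involving the two types of weights: the diagonal scaling by the $w_i$ deforms the shape of the Newton polytope but leaves the combinatorics of subdivisions invariant, whereas the coefficients $c_A$ are what actually encode the lifting. Separating these two roles cleanly is the one point that needs care; everything else is a direct translation of the proof of Theorem~\ref{thm:TriangCube}, and the count of $\Sym(n)$-orbits of regular triangulations of $(\Delta_{n-1})^m$ advertised in Theorem~\ref{thm:TriangSimplices} then follows by exactly the same symmetry reduction used there.
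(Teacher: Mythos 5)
Your proof is correct and follows essentially the same route as the paper's: read off the Newton polytope of $u$ as a diagonal distortion of $(\Delta_{n-1})^m$, invoke the tropical duality of Proposition~\ref{prop:TropDual}, and transport the regular subdivision back through the invertible linear map. You spell out two points the paper leaves implicit---why the diagonal scaling $D$ carries regular subdivisions to regular subdivisions, and the explicit converse construction taking $w_i = 1$, $c_A = \lambda(A)$---but the underlying argument is the same.
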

\begin{proof}
 Let $\cI$ be an indifference complex, it is affinely implementable if and only if there exists an affine maximizer $f$ with $\cI(f) = \cI$.
 As $f$ chooses the allocation, which maximizes the tropical polynomial $u$ from equation \eqref{eq:AffMax}, the difference sets $Q_A$ correspond to the regions of $V(u)$.
 The Newton polytope of $u$ is a linear transformation of $(\Delta_{n-1})^m$.
 Thus, by Proposition~\ref{prop:TropDual}, there is a duality between the regions $Q_A$ and the regular subdivision of the Newton polytope of $u$.
 Reversing the linear transformation, we get a regular subdivision of $(\Delta_{n-1})^m$, which finishes the proof.
\end{proof}

For the classification of affine maximizers, we again say that two affine maximizers share the same combinatorial type, if they have the same indifference complex. 
Also, we use the same notion of degeneracy as before, i.e., an affine maximizer $f$ is degenerate, if the regular subdivision corresponding to its indifference complex $\mathcal{I}(f)$ is not a triangulation.

For $m=1$, the allocation space corresponds to the vertices of $\Delta_{n-1}$, which has just the trivial subdivision.
So there is only one combinatorial type of allocation functions on one item.
Further, note that for $n=2$ the allocation space corresponds to the vertices of $(\Delta_1)^m$, which is isomorphic to the $m$-cube.
This can be seen intuitionally, as we know that the vertices of the $m$-cube correspond to the possible allocations for one player and for the second player, we allocate all items which did not get allocated to the first player.
Thus, Theorem~\ref{thm:Cubes} gives us the number of types of nondegenerate affine maximizers.

\begin{corollary}
  There are 23 (resp.\ 3{,}706{,}261) combinatorial types of nondegenerate affine maximizers for two players and three (resp.\ four) items, up to permutation of the players and the items.
\end{corollary}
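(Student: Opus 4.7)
The plan is to invoke the bijection established by the theorem just proved in this section: combinatorial types of affine maximizers for $n$ players and $m$ items correspond to regular subdivisions of the polytope $(\Delta_{n-1})^m$, with nondegenerate types corresponding to regular triangulations. Specialising to $n=2$ we have $\Delta_1 \cong [0,1]$, so $(\Delta_1)^m$ is affinely isomorphic to the $m$-cube $[0,1]^m$; under this identification, a vertex of the cube records which of the two players receives each item, so the set of allocations is in bijection with $\{0,1\}^m$. Hence nondegenerate $2$-player affine maximizers for $m$ items correspond bijectively to regular triangulations of $[0,1]^m$.

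Next I would identify the relevant symmetries. Permuting the $m$ items acts on $[0,1]^m$ via the standard coordinate-permutation action of $\Sym(m)$, which is exactly the action considered in Theorem~\ref{thm:Cubes}. Permuting the two players swaps the rows of the allocation matrix, which under our identification becomes the antipodal involution $x \mapsto \mathbf{1} - x$ on the vertex set of the cube. The symmetry group to quotient by is therefore the subgroup of $\Gamma_m$ generated by $\Sym(m)$ together with this antipodal element.

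Finally, I would apply Theorem~\ref{thm:Cubes}, which records that the number of $\Sym(m)$-orbits of regular triangulations of $[0,1]^m$ equals $23$ for $m=3$ and $3{,}706{,}261$ for $m=4$; see the third column of Table~\ref{tab:triangulations}. To finish, the main technical point is to verify that adjoining the antipodal involution does not collapse these orbits any further. The gap between the $\Sym(m)$-counts and the full $\Gamma_m$-counts $6$ and $235{,}277$ in the fourth column of the same table shows that this is not automatic — additional cube symmetries can merge many orbits. I would resolve this by inspecting the orbit data produced by \mptopcom and checking that every $\Sym(m)$-orbit of regular triangulations of $[0,1]^m$ is already invariant under the antipodal map; this is the main obstacle but amounts to a finite verification. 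With that invariance in hand, the bijection above combined with Theorem~\ref{thm:Cubes} yields the claimed numbers.
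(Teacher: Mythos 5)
Your proposal follows the same chain as the paper---identify the allocation space for $n=2$ players with the vertex set of $(\Delta_1)^m \cong [0,1]^m$, translate nondegenerate affine maximizers into regular triangulations of the $m$-cube, and cite Theorem~\ref{thm:Cubes}---but you are noticeably more careful about the symmetry group than the text is. The paper's derivation is effectively a single sentence (``Thus, Theorem~\ref{thm:Cubes} gives us the number \ldots''), passing directly from the $\Sym(m)$-orbit count of that theorem to the corollary's ``up to permutation of the players and the items''. You correctly point out the loose end: permuting the two players is realized on $[0,1]^m$ by the antipodal involution $x\mapsto\mathbf{1}-x$, which lies in $\Gamma_m$ but not in the point stabilizer $\Sym(m)$ of the origin, so the relevant group is $\Sym(m)\times\Sym(2)\cong\langle\Sym(m),\iota\rangle$, a strictly larger group. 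One therefore has to check that adjoining $\iota$ does not merge any of the $23$ (resp.\ $3{,}706{,}261$) $\Sym(m)$-orbits; your observation that the drop from $23$ to $6$ (resp.\ $3{,}706{,}261$ to $235{,}277$) under the full hyperoctahedral group $\Gamma_m$ shows this cannot just be waved away is exactly the right sanity check. Your plan to settle it by inspecting the \mptopcom\ orbit data is the natural way to close the gap. In short: your proposal is correct in approach and, on this point, more rigorous than the paper's own argument, which leaves the $\iota$-invariance of the $\Sym(m)$-orbits implicit.
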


For other numbers of players and items, let us first talk about symmetry again.
We are interested in the combinatorial types of affine maximizers up to permutation of the players and permutation of the items.
This means for $n$ players and $m$ items, we want to know the orbits of regular triangulations of $(\Delta_{n-1})^m$ with respect to the automorphism group $\Sym(n) \times \Sym(m)$.
If we consider the vertices of $(\Delta_{n-1})^m$ to be matrices as described in equation \eqref{eq:AllocSpace}, $\Sym(n) \times \Sym(m)$ acts on them by permuting rows and columns.

For the actual number of triangulations, Table~\ref{tab:triangulationsAffMax} resumes what is known so far.
Interestingly again, $(\Delta_2)^2$ has no nonregular triangulations whereas $(\Delta_3)^2$ has, but the number of nonregular triangulations of $(\Delta_3)^2$ is not known.
Also, the number of regular triangulations of $(\Delta_4)^2$ and $(\Delta_2)^3$ are not known.

\begin{theorem} \label{thm:TriangSimplices}
 There are 5 combinatorial types of nondegenerate affine maximizers for three players and two items, up to permutation of the players and items.
 Further, there are 7{,}869 combinatorial types of nondegenerate affine maximizers for four players and two items, up to permutation of the players and items.
\end{theorem}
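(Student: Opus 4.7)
The plan is to apply the preceding theorem identifying combinatorial types of nondegenerate affine maximizers with regular triangulations of the product of simplices $(\Delta_{n-1})^m$, and then to reduce the classification to a finite orbit enumeration under the appropriate symmetry group.

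First I would unwind the definitions. By the bijection just established, a nondegenerate affine maximizer for $n$ players and $m$ items corresponds to a regular triangulation of $(\Delta_{n-1})^m$. Two such affine maximizers agree up to permutation of players and items precisely when the corresponding triangulations lie in the same orbit under the natural action of $\Sym(n) \times \Sym(m)$ on $(\Delta_{n-1})^m$: the factor $\Sym(n)$ permutes the vertices within each simplex factor (i.e., relabels players), while $\Sym(m)$ permutes the $m$ factors of the product (i.e., relabels items). Since both actions are realized by affine isomorphisms of the polytope, they send regular triangulations to regular triangulations, so orbit counting on regular triangulations is well-defined.

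With this reduction, the two numerical statements are simply the orbit counts of regular triangulations of $(\Delta_2)^2$ and $(\Delta_3)^2$ under $\Sym(3)\times\Sym(2)$ and $\Sym(4)\times\Sym(2)$, respectively. For $(\Delta_2)^2$ (the $6$-vertex prism of dimension $4$) the set of regular triangulations is small and classical, and the five orbit representatives can be listed by hand or verified by consulting the known enumeration; this yields the count $5$. For $(\Delta_3)^2$ the number of regular triangulations is much larger, but the full list is known and I would invoke \mptopcom~\cite{JordanJoswigKastner:2018}, which is designed exactly to enumerate regular triangulations of a point configuration modulo a prescribed symmetry group: feeding in the vertex set of $(\Delta_3)^2$ together with the generators of the $\Sym(4)\times\Sym(2)$-action produces the list of orbits, from which one reads off $7{,}869$.

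The main obstacle is computational rather than conceptual: one must (i) correctly encode the symmetry group as permutations on the $nm$ vertices of $(\Delta_{n-1})^m$, and (ii) ensure that only regular triangulations are counted, since $(\Delta_3)^2$ does admit nonregular ones. Both issues are handled natively by \mptopcom, so the remaining work is to verify the input encoding. Once the computation is done, the two claimed numbers follow directly, completing the proof.
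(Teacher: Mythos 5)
Your approach is exactly the one the paper takes: invoke the bijection between nondegenerate affine maximizers and regular triangulations of $(\Delta_{n-1})^m$, observe that relabeling players and items corresponds to the $\Sym(n)\times\Sym(m)$ action (permuting vertices within factors and the factors themselves), and obtain the counts $5$ and $7{,}869$ from the \mptopcom\ enumeration of orbits of regular triangulations, as recorded in Table~\ref{tab:triangulationsAffMax}. One small slip: $(\Delta_2)^2$ is a $9$-vertex, $4$-dimensional product of two triangles (not a $6$-vertex prism), with $108$ regular triangulations, but this does not affect the argument.
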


\begin{table}[tb]
  \caption{Triangulations of $(\Delta_{n-1})^m$.}
  \label{tab:triangulationsAffMax}
  \renewcommand{\arraystretch}{0.9}
  \begin{tabular*}{\linewidth}{@{\extracolsep{\fill}}rrr@{}}\toprule
    $(n,m)$ & regular & $\Sym(n) \times \Sym(m)$-orbits  \\
    \midrule
    (3,2) &  108 &  5  \\
    (4,2) & 4{,}494{,}288 & 7{,}869 \\
    \bottomrule
  \end{tabular*}
\end{table}

We now want to compute the cardinality sensitivity for affine maximizers.
As we have multiple players, it differs from the cardinality sensitivity for local auctions.
In this case, it can be thought of the maximum over the cardinality stabilities for all local subauctions.
Recall that for an allocation matrix $A \in \{0,1\}^{(m \times n)}$, $A_i$ denotes its $i$-th column which shows the items that are allocated to player $i$.
Also, let $\Psi_{(n,m)}$ be the set of indifference complexes for $n$ players and $m$ items, the cardinality sensitivity for affine maximizers can then be expressed as
	\[
	 \mu_c(n,m) = \min_{\cI \in \Psi_m} \bigg\{ \max \bigSetOf{\Cd(A_i,B_i)}{A,B \in F \text{ for some } F \in \cI, i \in [n]} \bigg\} \enspace .
	\]
The first observation is that the case for $n=2$ players is again similar to the case for local allocation functions.
Thus, Proposition~\ref{prop:CardStab} gives already the cardinality stability in this case.
\begin{corollary}
 The cardinality sensitivity for affine maximizers for $2$ players and $m$ items is $\mu_c(2,m) = 1$.
\end{corollary}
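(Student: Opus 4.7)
The plan is to reduce the statement directly to Proposition~\ref{prop:CardStab} by exploiting the special structure of the $n=2$ case. First I would observe that, by the affine maximizer characterization, the combinatorial types of affine maximizers for $n=2$ players and $m$ items are in bijection with regular subdivisions of the Newton polytope $(\Delta_{n-1})^m = (\Delta_1)^m$, which is exactly the $m$-cube $[0,1]^m$. Hence the same regular subdivisions that classify truthful one-player $m$-item mechanisms also classify affine maximizers with two players and $m$ items.

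Second, I would translate the two distance measurements into a single one. For any allocation matrix $A \in \Omega$ with $n=2$, the column-sum constraint forces $A_1 + A_2 = \mathbf{1} \in \{0,1\}^m$, and therefore $|A_2|_1 = m - |A_1|_1$. Consequently, for any two allocations $A, B \in \Omega$,
\[
  \Cd(A_2, B_2) \ = \ \bigl| (m-|A_1|_1) - (m-|B_1|_1) \bigr| \ = \ \Cd(A_1, B_1) \enspace .
\]
So the maximum of $\Cd(A_i, B_i)$ over $i \in [2]$ coincides with the single-player cardinality distance on player~$1$'s bundle indicators, which are simply the vertices of $[0,1]^m$.

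Third, I would invoke Proposition~\ref{prop:CardStab}: the regular subdivision of $[0,1]^m$ built there (the slicing by the hyperplanes $\sum_i x_i = k$) has the property that any two vertices of a common maximal cell differ in coordinate sum by at most~$1$. Lifting this subdivision through the bijection of the first step yields an affine maximizer whose indifference complex witnesses $\mu_c(2,m) \leq 1$. The lower bound $\mu_c(2,m) \geq 1$ is immediate, since any nontrivial subdivision must place at least two distinct vertices in some common cell, and adjacent vertices of the cube already realize cardinality distance~$1$.

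The only real obstacle is notational: one must be careful that the allocation convention used in the affine-maximizer section (columns indexed by players) is compatible with the row-indexed convention used earlier, but the complementarity $A_1 + A_2 = \mathbf{1}$ holds in either transposition and the argument goes through unchanged.
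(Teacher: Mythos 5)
Your proposal is correct and follows essentially the same route as the paper: identify $(\Delta_1)^m$ with the $m$-cube, reduce via the complementarity $A_1 + A_2 = \mathbf{1}$ to the single-player cardinality distance, and invoke Proposition~\ref{prop:CardStab}. The paper states this reduction more tersely (noting only that the $n=2$ case is \enquote{again similar to the case for local allocation functions}) and gives the explicit witness $c_A=-(\sum_j A_{1j})^2$ with equal player weights; your explicit verification that $\Cd(A_2,B_2)=\Cd(A_1,B_1)$ is a useful clarification of the same idea.
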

One way to obtain an affine maximizer with this cardinality sensitivity, is to set its allocation biases to $c_A = -\left(\sum_{j = 1}^m A_{1j}\right)^2$, while the player weights $w_1$ and $w_2$ are the same for both players.
Generalizing the approach of Proposition~\ref{prop:CardStab} for more than two players, we get the following bounds for the cardinality sensitivity.
\begin{proposition}
 The cardinality sensitivity for affine maximizers for $n$ players and $m$ items is bounded by
 \[
  \mu_c(n,m) \leq \left\lceil \frac{m}{2} \right\rceil \enspace .
 \]
\end{proposition}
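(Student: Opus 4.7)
The plan is to mimic the slicing argument in the proof of Proposition~\ref{prop:CardStab}, replacing the unit cube by the polytope $(\Delta_{n-1})^m$. Write $k = \lceil m/2 \rceil$ and, for an allocation $A$, let $S_i(A) = \sum_{j \in [m]} A_{i,j}$, so that $\sum_{i \in [n]} S_i(A) = m$. By the affine-maximizer analog of Theorem~\ref{thm:TriangCube}, it suffices to exhibit a regular subdivision $\cS$ of $(\Delta_{n-1})^m$ whose every maximal cell has $S_i$-range at most $k$ for each player $i$.

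My candidate lifting function is
\[
  \lambda(A) \ = \ -\sum_{i \in [n]} \max\bigl\{S_i(A) - k,\, 0\bigr\} \enspace ,
\]
whose piecewise linear extension to the polytope is concave, with creases along the hyperplanes $\{S_i = k\}$. For each subset $T \subseteq [n]$, consider the affine function $L_T(A) = -\sum_{i \in T}(S_i(A) - k)$; a short calculation gives
\[
  L_T(A) - \lambda(A) \ = \ \sum_{i \in T} \max\{k - S_i(A),\, 0\} \,+\, \sum_{i \notin T} \max\{S_i(A) - k,\, 0\} \ \geq \ 0 \enspace ,
\]
with equality precisely on the set $\mathcal{F}_T$ of vertex allocations satisfying $S_i(A) \geq k$ for $i \in T$ and $S_i(A) \leq k$ for $i \notin T$. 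Hence each $\mathcal{F}_T$ is a face of $\cS$, and the maximal cells of $\cS$ arise from those $T$ for which the associated linear piece of $\lambda$ is full-dimensional.

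Within any $\mathcal{F}_T$, the integer value $S_i(A)$ lies in $\{k, \dots, m\}$ for $i \in T$ (width $m - k = \lfloor m/2 \rfloor \leq k$) and in $\{0, \dots, k\}$ for $i \notin T$ (width $k$). Consequently $\Cd(A_i, B_i) \leq k$ for all $A, B \in \mathcal{F}_T$ and all $i \in [n]$, yielding $\mu_c(n, m) \leq \lceil m/2 \rceil$. If needed, $\cS$ can be refined to a regular triangulation using Remark~\ref{rem:regularRefinement}, which only splits cells further and preserves the cardinality bound. The step that will require the most care is confirming that the $\mathcal{F}_T$ are genuine maximal cells of $\cS$ and not proper faces of something larger: this relies on the observation that any affine $L \geq \lambda$ which is tight on vertices from two distinct linear pieces $D_T, D_{T'}$ of the extension $\tilde\lambda$ would have to agree with $L_T$ on $D_T$ and with $L_{T'}$ on $D_{T'}$, forcing $T = T'$.
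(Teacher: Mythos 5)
Your piecewise-linear height function $\lambda(A) = -\sum_i\max\{S_i(A)-k,0\}$ is a genuinely different choice from the paper's $-\bigl(\max_j S_j(A)\bigr)^2$, and it is a clean one: since $\sum_i S_i(A)=m$ and $2k\geq m$, at most one $S_i$ can exceed $k$, so your $\lambda$ simplifies to $-\max\{\max_i S_i(A)-k,0\}$, a piecewise-linear analogue of the paper's quadratic. The conceptual plan (slice by player cardinalities) is sound, and your $L_T$ computation correctly exhibits each $\mathcal{F}_T$ as a cell of the regular subdivision.

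There is, however, a gap at the step you flagged, and it is not quite the one you describe. What must be shown is that \emph{every} cell of the regular subdivision lies within a single piece $\mathcal{D}_T=\{x : S_i(x)\geq k \text{ for } i\in T,\, S_i(x)\leq k \text{ for } i\notin T\}$ --- equivalently, that $\tilde\lambda=\min_T L_T$ is the concave envelope of the lifted vertex heights, i.e.\ that each $\mathcal{D}_T$ equals $\conv(\mathcal{F}_T)$. Your closing observation (an affine $L\geq\lambda$ tight on vertices from both $\mathcal{D}_T$ and $\mathcal{D}_{T'}$ forces $T=T'$) tacitly assumes $\mathcal{F}_T$ affinely spans $\mathcal{D}_T$, which is precisely the point in question: if some $\mathcal{D}_T$ had a non-lattice vertex, then $\conv(\mathcal{F}_T)\subsetneq\mathcal{D}_T$, the subdivision would acquire extra cells straddling the creases $\{S_i=k\}$, and the cardinality bound could fail. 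The repair is short: the constraint system $\sum_i x_{ij}=1$ for $j\in[m]$, $\sum_j x_{ij}\geq k$ for $i\in T$, $\sum_j x_{ij}\leq k$ for $i\notin T$, and $x\geq 0$ has coefficient matrix equal to the node--edge incidence matrix of the complete bipartite graph $K_{m,n}$, which is totally unimodular; with an integral right-hand side, each $\mathcal{D}_T$ is therefore an integral polytope, so $\mathcal{D}_T=\conv(\mathcal{F}_T)$ and your argument closes. Note that the paper's own proof does not need this lemma: it works directly with explicit supporting hyperplanes $H_{j,k}$ of the \emph{lifted} polytope $\conv\{(A,\lambda(A))\}$, whose faces automatically have lattice vertices.
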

\begin{proof}
 Consider the height function $\lambda(x) = -\left(\max_{j \in [n]} \sum_{i = 1}^m A_{ij} \right)^2$.
 For $k \in \{ \left\lceil \frac{m}{2} \right\rceil + 1, \dots, m\}$ and $j \in [n]$, the hyperplane
 \[
  H_{j,k} = \SetOf{(A, \lambda) \in \RR^{m\times n} \times \RR}{\sum_{i = 1}^m A_{ij} + \frac{\lambda}{2k-1} = \frac{k^2-k}{2k-1}}
 \]
 is a supporting hyperplane which intersects with the lifted allocation polytope in a facet $P_{j,k}$.
 The vertices of this facet are exactly the lifted allocations, where player $j$ gets a bundle of cardinality $k$ or $k-1$.
 To see this, let $j$ and $k$ be fixed and let $V$ be a vertex of $\Omega$; see \eqref{eq:AllocSpace}.
 Writing $\sum_{i = 1}^m V_{ij} = k + \delta$ with $\delta \in \{-k, \dots, m-k\}$, we get
 \begin{align*}
  \sum_{i = 1}^m V_{ij} + \frac{\lambda(V)}{2k-1} \;
  &\leq \; \frac{1}{2k-1}\left( (k+\delta)(2k-1) - (k+\delta)^2 \right) \\
  &= \frac{1}{2k-1}(k+\delta)(k-\delta-1) \; = \; \frac{k^2-k-\delta(\delta+1)}{2k-1} \;
  \leq \; \frac{k^2-k}{2k-1} \enspace ,
 \end{align*}
 where the last inequality is tight only if $\delta \in \{-1,0\}$ and the first inequality is tight for $\delta \geq -1$, because $k \geq \left\lceil \frac{m}{2} \right\rceil + 1$.
 Let now $\cS$ be the subdivision we get be choosing the height function $\lambda$ as defined above and let $P \in \cS$ be a polytope of the subdivision.
 Let $A,B$ be vertices of $P$.
 If $P$ is the projection of some $P_{j,k}$ for $k \in \{ \left\lceil \frac{m}{2} \right\rceil + 1, \dots, m\}$ and $j \in [n]$ by omitting the last coordinate, then $\Cd(A_j, B_j) \leq 1$, as the vertices of $P$ consist only of those allocations, where player $j$ gets a bundle of cardinality $k$ or $k-1$.
 Moreover, each other column $\ell \neq j$ has at most $m-k+1$ ones, so $\Cd(A_{\ell}, B_{\ell}) \leq m-k+1 \leq \left\lfloor \frac{m}{2} \right\rfloor$.
 If $P$ is not the projection of some $P_{j,k}$, then no column of $A$ or $B$ has more than $\left\lceil \frac{m}{2} \right\rceil$ many ones, so $\max_{j \in [n]} \Cd(A_j, B_j) \leq \left\lceil \frac{m}{2} \right\rceil$, which finishes the proof.
\end{proof}

For the remainder of this work, we want to study a new way of visualizing affine maximizer.
Recall the tropical function $u$ defined in equation \eqref{eq:AffMax}.
Adding a scalar multiple $\lambda \in \RR$ of the matrix 
\[
 W = \begin{bmatrix}
 	\frac{1}{w_1} & \frac{1}{w_2} & \dots & \frac{1}{w_n} \\
 	\frac{1}{w_1} & \frac{1}{w_2} & \dots & \frac{1}{w_n} \\
 	\vdots & \vdots &  & \vdots \\
 	\frac{1}{w_1} & \frac{1}{w_2} & \dots & \frac{1}{w_n} \\
 \end{bmatrix} \in \RR^{m \times n}
\]
to the type vector $\theta$ of the players (seen as a matrix in $\RR^{m \times n}$), $u(\theta + \lambda W)$ will be maximized in the same set of allocations as $u(\theta)$.
Therefore the difference sets all have the same space of lineality, namely $W \RR$.
Therefore, two difference sets $Q_A$ and $Q_{A'}$ have nonempty intersection, if and only if they have nonempty intersection in the projective space $\RR^{m \times n} / W \RR$.
So, $\RR^{m \times n} / W \RR$ contains all combinatorial information of the partitioning of the type space into difference sets.
We can represent the projective space $\RR^{m \times n} / W \RR$ by choosing one coordinate, which we normalize to the value $0$.
Thus we get a representation in one dimension lower than the original space.
This \enquote{\,trick\,} is often used to visualize tropical hypersurfaces of homogeneous polynomials.

\begin{figure}[t]
 \center
 \begin{tikzpicture}[scale = 1.2]
  \tikzstyle{u} = [circle, scale=0.25, fill=black]
  \definecolor{ggreen}{rgb}{ 0.466 0.925 0.619 }
  \tikzstyle{trop} = [color = ggreen, ultra thick]
  \draw[->, color=lightgray] (-3,0) -- (3,0);
  \draw[->, color=lightgray] (0,-3) -- (0,3);
	\node[below] at (3,0) {$\theta_1$};   
	\node[left] at (0,3) {$\theta_2$};   
	\node[below left] at (0,0) {$0$};
	\node[below] at (2,-0.1) {$1$};   
	\node[left] at (-0.1,2) {$1$};   
	\node[above] at (-2,0.1) {$-1$};   
	\node[right] at (0.1,-2) {$-1$};   
  
  \node[u] (1) at (0,2) {};
  \node[u] (2) at (2,2) {};
  \node[u] (3) at (2,0) {};
  \node[u] (4) at (0,-2) {};
  \node[u] (5) at (-2,-2) {};
  \node[u] (6) at (-2,0) {};
  \draw[thick] (1) -- (2) -- (3) -- (4) -- (5) -- (6) -- (1) -- cycle;
  
  \draw[trop] (4) -- (0,0) -- (6);
  \draw[trop] (0,0) -- (2);
  \node at (1.25,0) {$Q_{100}$};
  \node at (0,1.25) {$Q_{010}$};
  \node at (-1.25,-1.25) {$Q_{001}$};
  
  \draw[dashed, fill=gray, fill opacity = 0.3] (1,-1) -- (-1,-1) -- (-1,1) -- (1,1) -- (1,-1);
 \end{tikzpicture}
 \caption{$2$-dimensional representation of the partitioning of the $3$-dimensional type space $[0,1]^3$ into difference sets for the affine maximizer $f$ with $f(\theta) \in \argmax\{\theta_1, \theta_2, \theta_3\}$.
 The green line separates the type space into different difference sets. When the types of the players lie in the lower right region $Q_{100}$, the item gets allocated to the first player. The upper left region $Q_{010}$ corresponds to the second player and the lower left region $Q_{001}$ to the third player. The gray square represents the combined type space of the first two players for a fixed value of the third player $\theta_3 = \frac{1}{2}$.} \label{fig:2DAlloc}
\end{figure}

Let us first consider an example of affine maximizers for three players and one item.
We define the weights $w_i = 1$ for all players $i \in \{1,2,3\}$ and $c_A = 0$ for all allocations $A \in V(\Delta_2)$; here $V(\Delta_2)$ is the set of vertices of $\Delta_2$.
With this definitions we have $u(\theta_1, \theta_2, \theta_3) = \max \{\theta_1, \theta_2, \theta_3\}$.
Next, we restrict the individual types to a number in the interval $[0,1]$, the whole type space is then the $3$-cube $\Theta = [0,1]^3$.
We project the cube along the vector $(\frac{1}{w_1},\frac{1}{w_2},\frac{1}{w_3})=(1,1,1)$ onto the hyperplane $H = \SetOf{\theta \in \RR^3}{\theta_3 = 0}$.
The image of the cube along this projection is $\conv\{ [0,1]^2, [-1,0]^2 \}$.
In the projection, the $3$-dimensional type $(\theta_1,\theta_2,\theta_3)$ corresponds to the $2$-dimensional vector $(\theta_1 - \theta_3, \theta_2 - \theta_3)$.
Another interpretation is, for fixed value $\theta_3$ of the third player, the partitioning of the combined type space of the remaining players can be found in the square $[-\theta_3, 1-\theta_3]^2$; see Fig.~\ref{fig:2DAlloc}.

Let us now go up a dimension and consider affine maximizers for two players and two items.
The type space $\Theta = \RR^{2 \times 2} $ is $4$-dimensional, but with our trick we can represent it in $3$ dimensions.
For simpler readability, we relabel the types of the players in the following way.
The types of the first player are $s = (s_1,s_2)$ and the types of the second player are $t = (t_1,t_2)$.
As an example we pick an affine maximizer $f$, such that
\begin{equation}\label{eq:2x2affMax}
 f(s,t) \in \argmax \left\{s_1 + s_2, \frac{1}{5} + s_1 + t_2, \frac{1}{5} s_2 + t_1, t_1 + t_2 \right\} \enspace .
\end{equation}
The affine maximizer $f$ slightly favors the allocations, in which the items get distributed evenly onto both players, i.e., $(0,1,1,0)$ and $(1,0,0,1)$, but the weights for the different players are the same $w_1 = w_2 = 1$.
We restrict again all the types to lie in the interval $[0,1]$, therefore the type space is the $4$-cube $[0,1]^4$.
We project onto the hyperplane $ H= \SetOf{(s_1,s_2,t_1,t_2) \in \RR^4}{s_1 = 0}$.
Similarly to the previous example, the $4$-cube gets projected along the all-ones vector onto $H$ and its image is $\conv\{ [0,1]^3, [-1,0]^3 \}$.
A depiction of the partitioning of the space into difference sets can be found in Fig.~\ref{fig:2x2regions}; the images were made using \polymake \cite{DMV:polymake}.

\begin{figure}[t]
 \center
 \includegraphics[scale=1.3]{images/alloc3d2x2.png}
 \caption{$3$-dimensional representation of the partitioning of the $4$-dimensional type space $[0,1]^4$ into difference sets for the affine maximizer $f$ as described in equation \eqref{eq:2x2affMax}. 
 The green regions are the individual allocation regions and the black edges are the outlines of the projection of $[0,1]^4$ onto the hyperplane where the coordinate $s_1$ is zero. 
 The different regions are $Q_{0011}$ (upper left), $Q_{0110}$ (upper right), $Q_{1100}$ (lower left), $Q_{1001}$ (lower right).
 The images were made using \polymake.}
 \label{fig:2x2regions}
\end{figure}
\fi

\section{Conclusion}

We studied DSIC allocation mechanisms where a set of $m$ items is allocated to $n$ players. These mechanisms can be described by the corresponding one-player mechanisms when the types declared by the other players are fixed.
For a single player, the allocations correspond to vectors $\{0,1\}^m$, and the combinatorial types of the allocation mechanisms correspond to regular subdivisions of the $m$-dimensional unit cube.
We then used this insight to design mechanisms that are robust in the sense that small changes in the declared type do not lead to a major change in the set of allocated items.
In the full version of this paper, we will show how this method can be applied in order to describe affine maximizers with $n$ players.

For multiple copies of items, the deterministic allocations to a single player correspond to a subset of the lattice $\NN^m$, and it seems plausible that DSIC mechanisms for such scenarios can also be described by regular subdivisions.
\begin{question}
  How does our approach generalize to allocation mechanisms in a setting with multiple copies of items?
\end{question}

\ifshort
\else
The focus of our work are single-unit allocation mechanisms.
Though, our methods can be easily applied to multi-unit auctions as well.
The implementability of indifference complexes are tied then to the existence of corresponding regular subdivisions of the appropriate point set.
For example, an implementable indifference complex for a single-buyer multi-unit auction where there are two units of the first item and three units of the second item corresponds to a regular subdivision of the integer points in the rectangle $[0,2]\times [0,3]$; see Fig.~\ref{fig:2x3subdiv}.

\begin{figure}[tb]
  \begin{multicols}{3}
   \center
   \begin{tikzpicture}[scale = .5]
    \tikzstyle{facet} = [color=cubecolor, opacity=0.8];
    \tikzstyle{edge} = [thick];
    \coordinate (0) at (0,0);
    \coordinate (a) at (1,1);
    \coordinate (b) at (2,2);
    \coordinate (c) at (4.5,2);
    \coordinate (d) at (2,3);
    \coordinate (e) at (3,3.5);
    \coordinate (f) at (4,5.5);
    \coordinate (g) at (5,6.5);
    \coordinate (ax) at (0,1);
    \coordinate (ay) at (1,0);
    \coordinate (cx) at (7,2);
    \coordinate (cy) at (4.5,0);
    \coordinate (gx) at (7,6.5);
    \coordinate (gy) at (5,8);
    \coordinate (fy) at (4,8);
    \coordinate (dx) at (0,3);
    \fill[facet] (0)--(0,8)--(7,8)--(7,0)--(0)--cycle;
    \draw[edge] (ax)--(a)--(b)--(c)--(cx);
    \draw[edge] (ay)--(a)--(b)--(d)--(dx);
    \draw[edge] (cy)--(c)--(e)--(f)--(g)--(gy);
    \draw[edge] (gx)--(g)--(f)--(fy);
    \draw[edge] (d)--(e);
   \end{tikzpicture}

   (a)
   \center
   \begin{tikzpicture}[scale = 1.25]
    \tikzstyle{facet} = [fill=triangcolor];
    \tikzstyle{edge} = [thick];
    \tikzstyle{node} = [circle, scale=0.5pt, fill=red]
    \tikzstyle{innernode} = [circle, scale=0.5pt, fill=red]
    \tikzstyle{decoynode} = [circle, scale=0.5pt, fill=gray]
    \draw[facet] (0,0)--(0,3)--(2,3)--(2,0)--(0,0)--cycle;
    \node[node] (a1) at (0,0) {};
    \node[node] (a2) at (1,0) {};
    \node[node] (a3) at (2,0) {};
    \node[node] (b1) at (0,1) {};
    \node[node] (b2) at (1,1) {};
    \node[node] (b3) at (2,1) {};
    \node[decoynode] (c1) at (0,2) {};
    \node[decoynode] (c2) at (1,2) {};
    \node[node] (c3) at (2,2) {};
    \node[node] (d1) at (0,3) {};
    \node[node] (d2) at (1,3) {};
    \node[node] (d3) at (2,3) {};
    \draw[edge] (a1)--(a2)--(a3)--(b3)--(c3)--(d3)--(d2)--(d1)--(c1)--(b1)--(a1);
    \draw[edge] (b2)--(b1)--(a2)--(b2)--(c3)--(d1)--(b2);
    \draw[edge] (c3)--(d2);
   \end{tikzpicture}

   (b)
   \center
   \begin{tikzpicture}[scale = 1.25]
    \tikzstyle{u} = [circle, scale=0.5pt, fill=black];
    \coordinate (a) at (1,0.25);
    \coordinate (b) at (1,0.75);
    \coordinate (c) at (1.5,1.25);
    \coordinate (d) at (0.5,1.25);
    \coordinate (e) at (1,1.75);
    \coordinate (f) at (1,2.25);
    \coordinate (g) at (1,2.75);
    \node (dummy1) at (0,0) {};
    \node (dummy2) at (2,3) {};
    \fill[color=spancolor] (b)--(c)--(e)--(d)--(b)--cycle;
    \foreach \v in {a,b,c,d,e,f,g} \node[u] at (\v) {};
    \draw (a)--(b)--(c)--(e)--(f)--(g);
    \draw (b)--(d)--(e);
   \end{tikzpicture}

   (c)
  \end{multicols}
  \caption{The subdivision of the type space into difference sets by a local multi-unit auction mechanism for 2 items with two and three units respectively (a), together with the corresponding triangulation of the two by three rectangle (b) and its tight span (c). The utility of the buyer is given by $\max\{0, x-2, y-2, 2x-11, x+y-6, 2y-9, 2x+y-15,x+2y-15,3y-14,2x+2y-19,x+3y-22,2x+3y-32\}$.}
  \label{fig:2x3subdiv}
 \end{figure}

\fi

\bibliographystyle{amsplain}
\bibliography{references.bib}

\end{document}
